%% file: arxiv.tex
\documentclass{article}

\usepackage[english]{babel}

\usepackage[letterpaper,top=2cm,bottom=2cm,left=3cm,right=3cm,marginparwidth=1.75cm]{geometry}

\usepackage[utf8]{inputenc}
\usepackage[T1]{fontenc}
\usepackage{microtype}

\usepackage{xcolor}
\definecolor{DarkBlue}{rgb}{0.1,0.1,0.5}

\usepackage{amsmath,amsthm,amsfonts}
\usepackage{graphicx}
\usepackage[colorlinks=true, allcolors=DarkBlue]{hyperref}
\usepackage{natbib}
\usepackage{tikz}
\usepackage{pgfplots}

\pgfplotsset{compat=1.10}
\usepgfplotslibrary{fillbetween}
\usetikzlibrary{patterns}
\usepackage{subcaption}
\usepackage{xcolor}
\usepackage[skip=2mm,indent=5mm]{parskip}

\usepackage{bbm}
\usepackage[ruled,vlined]{algorithm2e}

\usepackage{amsthm}

\newtheoremstyle{spaced}
  {6pt}       
  {6pt}        
  {\itshape}   
  {}           
  {\bfseries}  
  {.}          
  { }          
  {}           

\newtheoremstyle{spaceddefinition}
  {6pt}         
  {6pt}          
  {\normalfont}  
  {}             
  {\bfseries}    
  {.}            
  { }            
  {}             

\theoremstyle{spaced}
\newtheorem{theorem}{Theorem}[section]
\newtheorem{proposition}[theorem]{Proposition}

\newtheorem{lemma}[theorem]{Lemma}

\newtheorem{observation}[theorem]{Observation}
\newtheorem{remark}[theorem]{Remark}
\theoremstyle{spaceddefinition}
\newtheorem{definition}[theorem]{Definition}

\title{Stochastic wage suppression on gig platforms \\ and how to organize against it}

\usepackage{amssymb}

\usepackage{authblk}
\stepcounter{footnote}
\addtocounter{footnote}{-1}
\author[1]{Ana-Andreea Stoica}
\author[1,2]{Celestine Mendler-D\"unner}
\author[1]{Moritz Hardt}
\affil[1]{Max Planck Institute for Intelligent Systems, T\"ubingen, and T\"ubingen AI Center}
\affil[2]{ELLIS Institute Tübingen}
\date{}        
\begin{document}

\maketitle

\begin{abstract}
  Digital labor platforms are increasingly used to procure human input, ranging from annotating data and red-teaming AI models, to ride-sharing and food delivery. A central concern in such markets is the ability of platforms to suppress wages by exploiting the abundance of low-cost labor. To study this exploitation pattern, we introduce a novel posted-price procurement model with coverage objectives. A platform seeks to complete $M$  tasks by posting prices to sequentially arriving workers, each of whom accepts a task if it exceeds their private cost.
   First, we show that under natural assumptions on the workers' estimated cost, there exists a simple pricing strategy for the platform to cover all $M$ tasks with wait time $O(M)$, while paying only a $O(\log(M)/M)$ fraction of the total cost of labor.  This result highlights how platforms can exploit workers’ uncertainty about the cost of labor to effectively suppress wages.
  Then, we study collective action as a lever to increase wages and promote welfare in digital labor markets. In particular, we show how a small coalition of targeted low-cost workers who commit to a price floor forces the platform's total spending from logarithmic to linear in $M$. In contrast, a randomly sampled coalition of equal size remains largely ineffective. We complement our theory with synthetic experiments, showcasing the benefits of collective action across different market regimes.
\end{abstract}

\input{sec1-introduction}
\input{sec2-prelimsmodel}

\input{sec3-nocoordination}

\input{sec4-randomalphacoordination}
\input{sec5-targetedcoordination}

\input{sec6-conclusion}

\bibliographystyle{plainnat}
\bibliography{main}

\newpage
\appendix

\input{sec-appendix}

\end{document}

%% file: sec1-introduction.tex
\section{Introduction}
\label{sec:intro}

The rise of the gig economy has changed the way people get paid for work~\citep{taylor2017good}. Gig platforms increasingly fragment work into hundreds of fine-grained microtasks---from labeling data and verifying outputs of AI models to moderating online content, and delivering meals. At the same time, employers can post wages on platforms and dynamically adapt their pricing to worker supply.
As a consequence, workers are in a position where they constantly make decisions about whether to accept a task, or not, in a highly strategic and ever changing environment.

A growing concern is that platforms can exploit the high uncertainty of workers about task costs to suppress wages~\citep{hara2018data,muller2019algorithmic}. In fact, empirical research on gig platforms shows that workers not only exhibit high uncertainty around their own costs~\citep{hara2018data} but also often overlook the hidden costs of searching, monitoring, and payment management~\citep{crain2016invisible,toxtli2021quantifying}.~\citet{dubal2023algorithmic} argues that employers on crowdsourcing platforms are able to exploit this and ``[...] produce unpredictable, variable, and personalized hourly pay.'' However, while wage variability and underpayment have been documented empirically, the theoretical foundations linking pricing strategies to wage suppression remain underexplored.~\looseness=-1 

Collective action---through coordinated price floors, information campaigns, or shared organizing tools---has emerged as one of the few levers available to workers~\citep{irani2013turkopticon,woodcock2021fight,savage2020becoming,sigg2025decline}. Our work provides a formal model that explains when and how such coordination can counteract algorithmic wage suppression. We provide a theoretical framework studying wage suppression on gig platforms and design interventions that mitigate worker exploitation through coordinated action of low-cost workers.

\subsection{Our contributions}

\begin{figure*}
  \centering
  \subfloat[Market regimes]
  {\includegraphics[width=.45\linewidth]{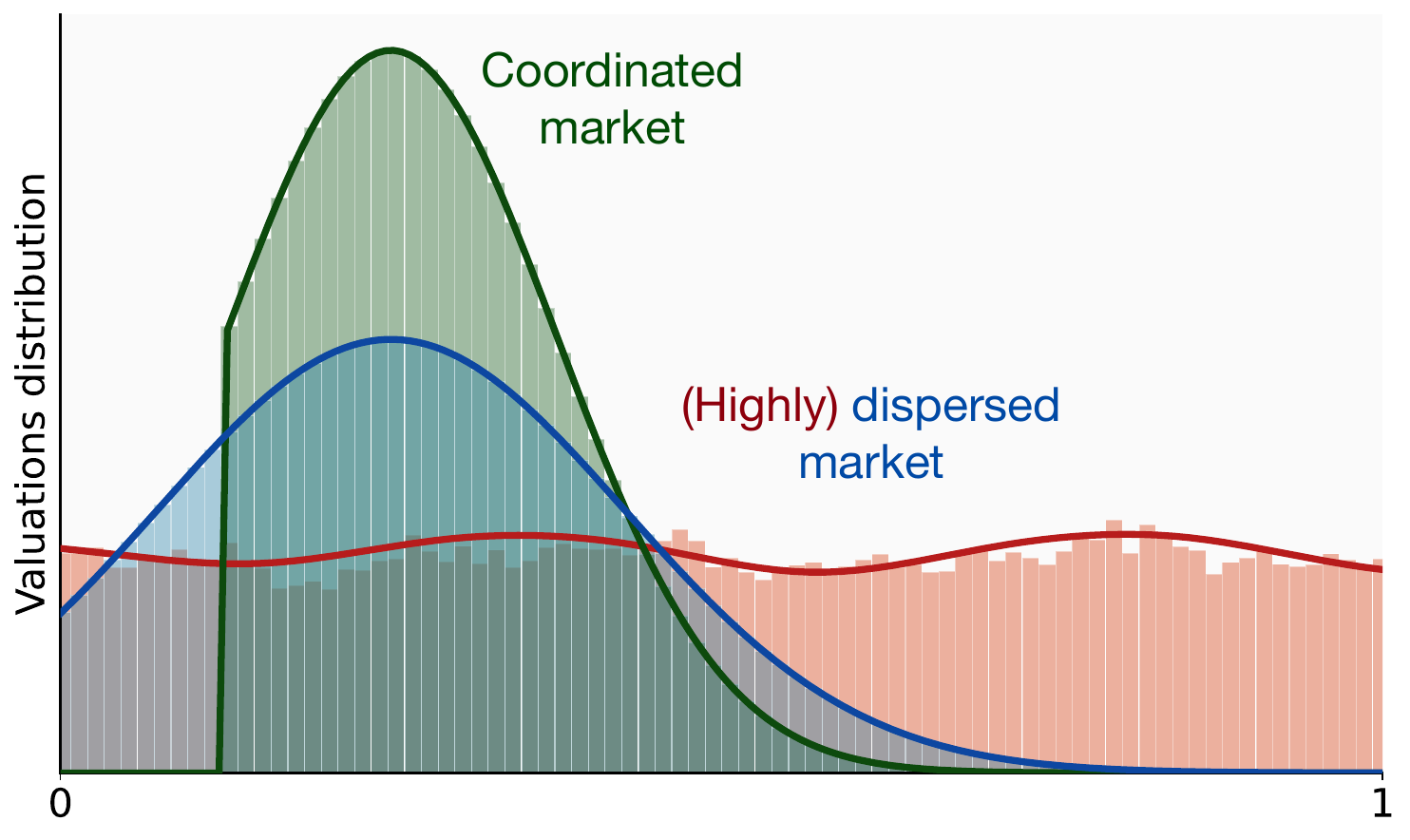}} 
  \hspace{0.05\textwidth}
  \subfloat[Total cost as a function of $M$]
  {\includegraphics[width=.45\linewidth]{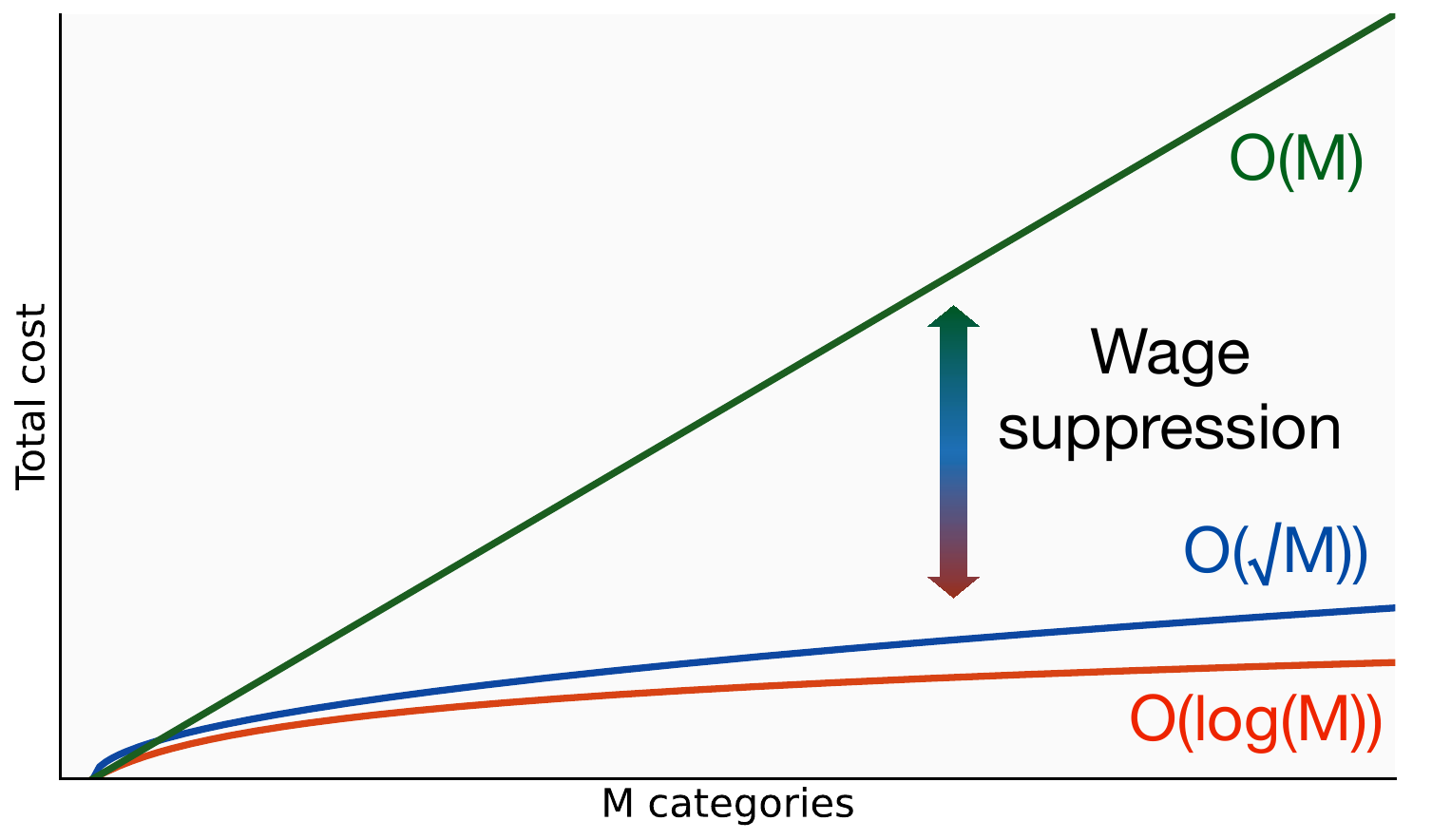}} 
  \caption{ An illustration of possible market regimes and corresponding wages under a linear wait time constraint. Dispersed markets (blue and red) are emblematic of workers' high uncertainty over their own costs and presence of workers willing to take arbitrarily low prices for tasks. In contrast, coordinated markets (green) represent cases where workers have coordinated on a price floor. A principal necessarily pays prices on the order of $O(M)$ in coordinated markets to achieve a linear wait time, however, he can exploit dispersed markets and reduces his total cost to sublinear, even logarithmic or constant with respect to $M$, preserving the same wait time guarantees.~\looseness=-1}
  \label{fig:illustration}
\end{figure*}

We propose a novel procurement model to understand wage suppression due to task uncertainty. In our model, a principal wishes to obtain the completion of tasks across $M$  categories. These categories might represent different types of data annotation (e.g., labeling images, annotating medical data) or tasks such as red-teaming exercises for machine-learning models. The principal represents a firm\footnote{Throughout our paper, we use the terms principal, firm, and platform interchangeably, leaving for future work the problem when \emph{multiple} firms compete on the same platform for attracting workers (e.g., multiple machine learning models competing for data annotators on MTurk or Prolific).} and repeatedly posts prices with the goal of attracting workers who will complete the tasks. Workers arrive sequentially and decide whether they are willing to complete the tasks for the given price. Upon completion the worker obtains the posted price, and the principal incurs the corresponding cost for payments together with the associated time they had to wait until a worker accepted the task.~\looseness=-1 

More formally, the process unfolds in iterations. At the start, all $M$ categories are active. At iteration~$t$, the principal posts a price $p_t$ that applies to any task from the currently active categories. Workers have a true cost associated with each task category, for which they have a noisy estimate, denoted as their \emph{valuation}. Workers draw their valuations from a set of distributions, one for each task category. These distributions model the workers' uncertainty in their cost estimate. Workers arrive sequentially and a worker accepts a task if and only if their valuation for one of the tasks is at most the posted price.~\looseness=-1

A \textbf{success event} for the principal occurs when a worker accepts a task. In this case, the principal pays the posted price $p_t$, and the corresponding category is removed from the active set. The  process continues with the reduced set of categories and a (possibly updated) price $p_{t+1}$. The \textbf{wait time} for iteration $t$ is the number of workers that had to be drawn until the first success.~\looseness=-1

We show that if there is a non-negligible price floor $p$ for each category (e.g., the market rate), the principal must pay a total cost of $pM$ to ensure coverage of all $M$ categories in finite time. However, if such a floor does not exist (i.e., workers accept a task at arbitrarily low wages), the principal can exploit the \emph{tails} of the valuations distributions and suppress wages while still preserving a wait time on the order of $O(M)$ with high probability. We denote this strategy as \emph{stochastic wage suppression} and analyze its implications for workers' total welfare:~\looseness=-1
\begin{itemize}
    \item We discuss properties of the workers' valuations distributions, connect these properties with empirical observations in the gig economy, and formally link them to different wage suppression regimes. Figure~\ref{fig:illustration} showcases our main results: workers who have a price floor obtain a total wage of $\Theta(M)$; as their uncertainty increases and low-wage workers appear, the principal can suppress wages down to a sublinear regime, while maintaining a wait time $O(M)$ with high probability.
    \item Next we study collective efforts of workers coordinating their costs. We show that such strategies are ineffective with only partial participation when workers are randomly recruited: if a random $\alpha$-fraction of workers coordinate on a price floor in each category, the wage suppression bounds remain unchanged for any fixed collective of relative size $\alpha<1$.  Borrowing from labor economics terminology, we denote this type of strategies as \emph{horizontal} strategies, as participants are recruited with equal probability across the entire population.
    \item In contrast, we show that \emph{vertical} strategies where we recruit a specific segment of the population are effective for a smaller number of participants: for any $\alpha$-fraction of categories, we find a closed-form condition for the minimum budget required to design a targeted intervention for the workers at the low-end of the valuations distributions. These targeted interventions restore wages from any sublinear regime to $\Theta(M)$, while maintaining a wait time of $O(M)$ with high probability.~\looseness=-1
\end{itemize}

Thus, building on a versatile model with minimal assumption, our work illustrates how the power of collective bargaining is dependent on the ability of a firm to wait out high-priced workers as well as the workers' ability to strategically coordinate. More specifically, we compare two types of collective action: \emph{horizontal} coordination in which workers in the collective are sampled randomly across categories versus \emph{vertical} coordination in which workers in the collective are sampled from a specific market segment. Whereas horizontal coordination essentially requires full worker participation to restore wages, vertical coordination is effective with only a fraction of workers joining the collective. We complement our theory with synthetic experiments.\footnote{Code is available online at \href{https://github.com/astoica/pricing_strategies_gig_economy}{https://github.com/astoica/pricing\_strategies\_gig\_economy}. All experiments use Monte Carlo simulations of various distributions averaged over $60$ runs and were run in Python 3.9 on local computers.} 

\subsection{Related work}
Our model is a dual to classical posted-price mechanisms~\citep{chawla2007algorithmic,hartline2009simple,chawla2010multi,babaioff2017posting,dutting2019posted,jullien2019information}, where some works assess monopolistic tendencies of platforms~\citep{eren2010monopoly,van2024distributionally}. In these models, workers exhibit a take-it-or-leave-it behavior based on their valuations and firms optimize to maximize average revenue under full or partial knowledge of valuations distributions. Recent works in this space optimize revenue regret bounds when the valuation distribution is unknown~\citep{leme2018contextual,cohen2020feature,liu2021optimal,tullii2024improved,bacchiocchi2025regret} or heterogeneous~\citep{cesa2019dynamic,lykouris2025contextual}, by developing dynamic pricing algorithms that learn this distribution (for a survey see~\cite{den2015dynamic}). Our work differs in two ways: first, it does not focus on learning the optimal pricing under no information about the valuation distribution (but rather, it analyzes the effect of allowing quantile information to a principal to minimize wages as a function of the market size); second, we analyze the effect of pricing strategies as the number of tasks increases, showcasing the effect on \emph{workers' wages} over time, as well as the effectiveness of collective action strategies in improving wages.~\looseness=-1 

In our model, instead of a seller maximizing revenue, a buyer (the principal, for example a platform or a model developer) minimizes procurement cost under coverage constraints by posting tasks with associated wages. Our model draws from economic models that study monopsony power in digital labor markets. The labor economics literature has documented that employers exercise significant wage-setting power due to search frictions and information asymmetries, especially in so-called ``thick markets,'' where there is a large number of buyers and sellers~\citep{manning2013monopsony,kingsley2015accounting,naidu2022labor,azar2024monopsony}, which motivates us to analyze wages as a function of the number of tasks set by a principal. Our posted-wage setting models truthfully gig economy platforms like Amazon Mechanical Turk (MTurk and Upwork~\citep{amazonmturk,upwork_milestones_2024}, which have been shown to exhibit substantial monopsony power. In particular,~\cite{dube2020monopsony} empirically estimates labor supply elasticities to quantify the wage markdown on MTurk, whereas~\cite{horton2010labor} directly estimate the workers' reservation prices distribution properties. In contrast, our model takes a complementary perspective by characterizing the pricing strategy a principal uses and their resulting total cost as a function of distributional properties of the workers' valuations.

Our model is closest in spirit to posted-price procurement models~\citep{badanidiyuru2012learning,singer2013pricing,singla2013truthful,charalampopoulos2025competitive}, but differs in objective and feedback: rather than maximizing the submodular value subject to a budget while learning an estimate of the optimal revenue in hindsight, we optimize the principal's cost to ensure coverage of all categories. An alternative framework models workers and tasks as a two-sided market with bilateral arrival uncertainty and creates optimal mechanisms for task assignment~\citep{dickerson2024matching}. 
We showcase that a simple pricing strategy based on distribution quantiles knowledge allows a principal to suppress prices with bounded guarantees. Quantile information has been recently used in pricing to show that it can lead to robust pricing menus~\citep{suzdaltsev2022distributionally,wang2025power}. Moreover, we analyze strategies for collective action that prevent a principal from exploiting low-wage workers, which differs from earlier works.~\looseness=-1

Our work is inspired by recent initiatives to organize workers on gig platforms, through information campaigns such as Turkopticon~\citep{irani2013turkopticon}, FairWork~\citep{graham2018towards}, ShiptCalculator~\citep{calacci2022bargaining}, platform interventions to set minimum wages on driving gig economy~\citep{parrott2018earnings,cachon2021pricing}, and self-organizing campaigns~\citep{doordash,salehi2015we,savage2020becoming,pricesurges,umney2024platform} (for a survey  see~\citet{sigg2025decline}). Recently, collective action has gained attention in the machine learning literature where researchers investigate new levers specific to algorithmic systems~\cite[c.f.,][]{vincent2021data, hardt2023algorithmic,baumann2024algorithmic, solanki2025crowding, gauthier2025statistical,zhu2025look}. 
Most theoretical works assume randomly sampled collectives, with recent ones focused on randomly drawing collectives from minority groups~\citep{ben2025fairness}. We refer to collectives formed by randomly recruiting workers as \emph{horizontal} collective action. In contrast, we refer to collectives in which workers are recruited from a specific market segment as \emph{vertical} collective action,  (in our model, in a particular task category or set of task categories). We compare these collective action strategies in our model. This is motivated by extensive literature on labor organizing~\cite{commons1955history,calmfors1988bargaining,olson2012logic} which compares economic outcomes under different collective formation principles. 
Our work complements these studies by demonstrating the benefits of targeted recruiting on digital platforms.~\looseness=-1

%% file: sec2-prelimsmodel.tex
\section{Market model}
\label{sec:prelim}

We model a simple market in which there is a set of workers $\mathcal{W}$. Each worker is associated with valuations over $M$ categories of data, denoting workers' perceived cost for completing a task in each category. We use $f_i$ to denote the pdf of workers' valuations distribution for category $i \in [M]$, with associated CDF $D_i$ and (left) quantile $Q_i(u):=D_i^{-1}(u):=\inf\{x:\,D_i(x)\ge u\}$ for $u\in[0,1]$. We denote $[M] = \{1,2,\cdots,M\}$. Hence, the workers’ valuations are i.i.d. draws from a product distribution $\mathcal{D} = D_1 \times \cdots \times D_M$ and bounded in $[0,1]$. We assume the workers behave rationally and accept a task that pays higher than their valuation (they have a positive profit), keeping the option to defer at zero profit.\footnote{Alternatively, a worker can choose the category with the largest margin $i^*=\mathrm{argmax}_{j \in A_t} p_t - v_{j}$, noting that our results hold since prices are posted \emph{ex-ante}.} As such, the valuations are equivalent to a \emph{reservation price}, a canonical quantity in models of labor markets that represents the minimum price for which a worker would accept a task~\citep{manski1999worker,burbano2016social,krueger2016contribution}.~\looseness=-1

The principal wishes to complete a set $A$ of $M$ tasks. To this end, they post a sequence of prices $\mathbf{p} = (p_1, \dots, p_M)$. For each price $p_t\in \mathbf{p}$, workers arrive sequentially and decide whether to complete one of the remaining tasks $a\in A$ or to defer. The price $p_t$ remains posted until the task is completed. Once a task is completed it is removed from the active set $A$. The number of deferred events at price $p_t$ is denoted as $N_t$. This leads to the  dynamic outlined in Algorithm~\ref{alg:model}. We denote by $q_t$ the probability of obtaining a completed task in iteration $t$, which depends on the pricing strategy and the workers' valuation distributions.~\looseness=-1

\begin{algorithm}[t]
\KwIn{Start with $A_1=[M]$. Define a sequence of prices $\mathbf{p}$:}
\nl For $t\geq1$ proceed as follows: \
\begin{enumerate}
    \item The principal posts a price $p_t$ for the completion of a task in the set of  active categories $A_t$. 
    \item Repeat until success: 
    \begin{enumerate}
    \item 
    A worker is drawn uniformly at random from a distribution of workers. The worker is characterized by a valuation profile $v$ with $v_j\sim D_j, \forall j$, each drawn independently.
    \item A ``success'' event is defined as 
    \[S = 1\big[\exists j \in A_t :  v_{j} \leq p_t\big].\] 
    Namely, there is a category $j$ for which the worker's valuation $v_j$---a proxy for his cost of the task---is lower than the principal's posted price $p_t$. 
    \begin{itemize}
        \item If $S=1$, 
        the worker completes a task for one of the available categories \[ i^* \in \{j \in A_t: v_j \leq p_t \}\] 
        \item If $S=0$ a new worker is sampled, independently, and the process repeats from (a).
    \end{itemize}
    \end{enumerate}
    \item The principal pays $p_t$ and the category $i^*$ gets removed from the set of categories that the principal is willing to pay for in the future: 
    \[A_{t + 1} = A_t \backslash \{i^*\}.\]
    We advance to iteration $t + 1$ as long as $A_{t+1}\neq\emptyset$. Let $N_t$ count the number of unsuccessful events that occurred at step $t$.
\end{enumerate}
\caption{Platform-worker interaction}
\label{alg:model}
\end{algorithm}

\medskip
\noindent \textbf{Wait time constraint.} A central quantity in our model is the wait time. 
The \emph{wait time} for the principal, for a sequence of prices $\mathbf{p}$ is 
\begin{equation}
{\rm WAIT}(\mathbf{p})= \sum_t N_t.
\end{equation}
Canonical works in matching markets and labor economics show that price and wait time for task uptake are often at a trade-off~\citep{mortensen1999new,mertikopoulos2020quick} and wait time varies as a function of market size~\citep{ashlagi2019matching}. To navigate this, empirical studies of the gig economy also show that employers often impose a wait time constraint through algorithmic means or penalties~\citep{maffie2024visible,johnston2024employment}.

In the following, we are interested in pricing strategies that guarantees the firm a linear expected wait time as a function of $M$, i.e.,~\looseness=-1

\begin{equation}\exists C>0: \mathbb{E}[{\rm WAIT}(\mathbf{p})]<CM\;\forall M,\label{eq:contraint}
\end{equation}
where randomness comes from the stochastic arrival process. Under this constraint a principal may care about the cost-time trade-off. We are particularly interested in this trade-off in the regime of large $M$ given the proliferation of microtasks in digital crowdwork platforms, whose number and diversity in categories continues to expand~\citep{alkhatib2017examining,wood2019good,gray2019ghost} as well as the rapid expansion of the gig economy into new market segments~\citep{garin2023evolution}. 

\medskip
\noindent \textbf{Payment and wage suppression.} The principal's \emph{total cost} for a pricing strategy $\mathbf{p}$ is the sum of the prices for all the success events (note that we are guaranteed a success event in each price stage by definition of the model, what varies is the wait time): 

\begin{equation}
    \mathrm{COST}(\mathbf{p}) = \sum\limits_{t = 1}^{M} p_t
\end{equation}

We assume that the platform does not take a cut, so the principal’s total cost equals the total wages earned by workers. 
Without much information about the workers' valuations, a principal is limited in their ability to minimize cost and therefore may use a noisy statistic such as pricing at the mean, median, or simply pay the maximum allowed price. Since valuations lie in $[0,1]$, paying a total cost of $M$ guarantees a wait time of exactly $M$. By contrast, access to information about valuation distributions allows a price-conscious principal to strategically `wait out' higher-priced workers and therefore pay below-market wages (a strategy observed in markets where employers have significant information about workers' reservation prices or labor supply~\citep{allon2012large,cachon2021pricing}).

We refer to regimes in which total payments scale sublinearly in $M$ as regimes of \emph{wage suppression}. We show that depending on the shape of the valuation distributions, even partial information---such as oracle access to quantile values---allows a principal to achieve sublinear total payments, including logarithmic or constant regimes in certain cases.~\looseness=-1

\subsection{Workers' task valuations}
\label{sec:char}

In our setup, the workers have valuations that correspond to their perceived cost for completing a task from a given category. The distributions $D_i$ for $i\in[M]$ model the workers' uncertainty over their own costs: for example, if a category $i$ corresponds to annotating images on Mechanical Turk, the workers will estimate their market rate of $p$ dollars per hour for this task category. 
In our model, the properties of these distributions characterize different market regimes and determine the principal's ability to suppress wages. We discuss some key characteristics of these distributions.

\medskip

\noindent \textbf{Dispersion.} In reality, in many market segments workers have uncertainty over their costs and thus exhibit high dispersion of perceived costs, a phenomenon observed by measuring workers' heterogeneity in reported reservation prices~\citep{chen2025reservation}. In online markets, unpaid work is shown to increase workers' uncertainty over their costs~\citep{gray2019ghost,saito2019turkscanner,toxtli2021quantifying} and lead to highly dispersed reported bids or accepted wages. We exemplify this case through the uniform distribution ${\rm Unif}[0,1]$. In contrast, a market with no uncertainty over $p$ would be equivalent to setting $D_i$ as the Dirac distribution at $p$ ($v \sim \delta_p$). Highly dispersed markets are vulnerable to wage suppression. A principal is able to exploit the positive mass of low-priced workers (who accept a task at arbitrarily low wages) and benefits from a pricing strategy that pays significantly under the mean or median. We provide conditions for distributions that exhibit dispersion to be exploited. In contrast, more concentrated distributions are more resilient to wage suppression. We illustrate dispersed markets in blue and red in Figure~\ref{fig:illustration}.

\medskip

\noindent \textbf{Right-skew.} A prevalent case in online markets features valuations heavily skewed toward zero, i.e., a significant mass of workers accept tasks at arbitrarily low wages. Empirically, reservation prices often exhibit high skewness~\citep{krueger2016contribution} (often exogenous to the market and influenced by previous wages and time in unemployment). In online markets specifically, wages are often skewed towards the low-earners~\citep{horton2010labor,hara2018data,hornuf2022hourly} with reported wage depreciation due to unpaid work. We note that such markets are particularly vulnerable to wage suppression by a principal who strategically sets low prices. We exemplify this case with the ${\rm Beta(a < 1,1})$ distribution and give a condition based on left-tail properties to show that a principal can achieve a constant total cost as a function of $M$ in such vulnerable market regimes.

\medskip

\noindent \textbf{Tail properties.} In our setting, the relevant behavior for characterizing wage suppression is the distributions' behavior near zero: we characterize the rate of growth of the total cost under strategic pricing as a function of tail growth rate near $0$ (i.e., a function characterizing the spread of workers who take arbitrarily low prices). An important special case is when there is a price floor $p^*$, i.e. a price below which no single worker will perform the task, and thus there is no left tail near $0$. When the market value is positive and accurately reflects workers' cost, this essentially imposes a \emph{floor reservation price}, seen in empirical studies of labor markets~\citep{krueger2016contribution} and collective action~\citep{jager2025collective}. We call these markets \emph{coordinated markets} and illustrate this case in green in Figure~\ref{fig:illustration}. As we will see the existence of a price floor makes markets resilient to wage suppression, limiting a principal's ability to strategically suppress prices while preserving his wait time guarantee. 

Next, we analyze a natural and versatile pricing strategy that guarantees linear expected wait time for any valuation distribution. For this strategy, we are interested in characterizing payments in different market regimes by analyzing the limit dependence of the principal's cost on $M$.

%% file: sec3-nocoordination.tex
\section{Stochastic wage suppression strategy}
\label{sec:nocoordination}

We propose a general pricing rule that ensures that the expected wait time is $O(M)$ with high probability for any valuation distribution. It is based on evaluating the left quantiles of the distribution, assuming that partial knowledge through oracle access is available to the principal: 

\begin{definition}[Stochastic wage suppression]
    We define the stochastic wage suppression pricing strategy $\textbf{p}^{\mathrm{SWS}}$ by setting the posted price at iteration $t$ to the $\theta/|A_t|$ quantile of the workers' valuations distributions:
    \begin{equation}
    p_t \;=\; \max_{i\in A_t} Q_i\!\Big(\frac{\theta}{\,|A_t|\,}\Big) 
    \qquad t=1,2,\dots,M,
\end{equation}
for some $\theta \in (0,1]$. In our model, exactly one category is removed at each iteration, so $|A_t| = M - t  +1 $ by construction.
\label{def:SWS}
\end{definition}

We denote this strategy as the \emph{stochastic wage suppression pricing strategy} because it exploits the tails of the valuation distribution to reduce payments for the principal.

\subsection{Linear wait time guarantee }

Our first main result is that the $\textbf{p}^{\rm SWS}$ satisfies the linear expected wait time constraint~\eqref{eq:contraint} for any valuation distribution. To show that, we note a sufficient condition that lower-bounds the probability of success for each iteration: 

\begin{lemma}[Lower bound on success probability]
    \label{lemma:quantileprices}
    For the stochastic wage suppression pricing strategy $\textbf{p}^{\mathrm{SWS}}$, there exists a constant $q^*>0$ independent of $M$ for which the success probabilities satisfy $q_t \geq q^*, \forall t \in [M].$ 
\end{lemma}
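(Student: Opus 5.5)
The plan is to lower-bound $q_t$ directly from the definition of the success event. At iteration $t$, with active set $A_t$ of size $n:=|A_t|=M-t+1$, a single worker fails exactly when $v_j>p_t$ for every $j\in A_t$. Valuations are independent across categories, so
\[
1-q_t \;=\; \prod_{j\in A_t}\bigl(1-D_j(p_t)\bigr).
\]
It therefore suffices to show $D_j(p_t)\ge \theta/n$ for each $j\in A_t$. Then $1-q_t\le (1-\theta/n)^n\le e^{-\theta}$, and we can take $q^*:=1-e^{-\theta}>0$, which does not depend on $M$ or $t$.

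The key step is the quantile inequality $D_j(p_t)\ge \theta/n$. First, by the choice $p_t=\max_{i\in A_t}Q_i(\theta/n)$ and the monotonicity of the CDF, we have $D_j(p_t)\ge D_j\bigl(Q_j(\theta/n)\bigr)$. Second, for the left quantile $Q_j(u)=\inf\{x: D_j(x)\ge u\}$ we have $D_j(Q_j(u))\ge u$. This follows from right-continuity of $D_j$: take a sequence $x_k\downarrow Q_j(u)$ with $D_j(x_k)\ge u$ and pass to the limit. Since valuations lie in $[0,1]$, the set is nonempty for $u\in(0,1]$, so $Q_j(u)$ is well defined. Note $\theta/n\in(0,1]$ because $\theta\le 1$ and $n\ge 1$. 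Combining the two facts gives $D_j(p_t)\ge\theta/n$.

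The final elementary step is $(1-\theta/n)^n\le e^{-\theta}$, which follows from $1-x\le e^{-x}$. This yields $q_t\ge 1-e^{-\theta}$ for all $t\in[M]$. For $\theta=1$ this gives $q^*=1-1/e$.

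The only delicate point is the quantile inequality when $D_j$ has atoms or flat regions. The proof uses only right-continuity of the CDF and the infimum definition, so it does not need the existence of a pdf $f_i$. A secondary point is that the worker's choice rule $i^*$ does not affect $q_t$: the success probability depends only on the event $\exists j\in A_t:\, v_j\le p_t$, so any tie-breaking among acceptable categories is irrelevant.
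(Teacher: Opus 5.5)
Your proposal is correct and follows the paper's proof essentially step for step. Like the paper, you use the product formula for $1-q_t$, then the quantile bound $D_j(p_t)\ge\theta/|A_t|$, then $(1-\theta/n)^n\le e^{-\theta}$, which gives $q^*=1-e^{-\theta}$. Your right-continuity argument for $D_j(Q_j(u))\ge u$ is a small sharpening: the paper writes this step as an equality, which holds only for continuous $D_j$ (as its pdf assumption provides), whereas your version also covers atoms.
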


\begin{proof}[Proof]
We can compute the success probabilities as 
\begin{equation}
\begin{gathered}
q_t \;=\; 1-\prod_{i\in A_t}\bigl(1-D_i(p_t)\bigr)
\;\ge\; 1-\Bigl(1-\tfrac{\theta}{M-t+1}\Bigr)^{M-t+1} \\
\;\ge\; 1-e^{-\theta}
\;=:\; q_{*}, \forall t \in [M],
\end{gathered}
\end{equation}
since for each active $i$, $D_i\bigl(p_t\bigr)\ge D_i\!\bigl(Q_i(\theta/|A_t|)\bigr)=\theta/|A_t|$. By construction, $|A_t| = M - t + 1, \forall t \in [M]$. Independence across categories gives
$q_t=1-\prod_{i\in A_t}(1-D_i(p_t))\ge 1-(1-\theta/|A_t|)^{|A_t|}\ge 1-e^{-\theta}$. Here we used that $(1 - \theta/x)^x \leq e^{-\theta}, \forall x,\theta > 0$. The $N_t$ are Geometric random variables since the process is sequential and each worker is drawn independently, $N_t\sim{\rm Geom}(q_t)$ on $\{1,2,\dots\}$. 
\end{proof}

\begin{theorem}[Wait time]
Consider the stochastic wage suppression strategy $\mathbf{p}^{\rm SWS}$ defined in Definition~\ref{def:SWS}, and let $q_{*}=1-e^{-\theta}$.
Then
\begin{equation}
\mathbb{E}[{\rm WAIT}(\mathbf{p}^{\rm SWS})]\;\leq\; \frac{M}{q_{*}} \;=\; O(M).
\end{equation}
Moreover, the iteration-wise wait times $\{N_t\}_{t=1}^M$ are independent sub-exponential random variables with parameters depending only on $q_{*}$.
Hence, for any $\delta\in(0,1)$,
\begin{equation}
{\rm WAIT}(\mathbf{p}^{\rm SWS}) \;=\; O\!\Big(M+\sqrt{M\log(1/\delta)}+\log(1/\delta)\Big)
\end{equation}
with probability at least $1-\delta$.
\label{thm:quantimeconcentration}
\end{theorem}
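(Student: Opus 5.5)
The plan is to build everything on Lemma~\ref{lemma:quantileprices}, which gives $q_t \ge q_* = 1-e^{-\theta}$ for every $t\in[M]$, together with the observation at the end of its proof that, conditionally on the history up to the start of iteration $t$, $N_t$ is geometric with parameter $q_t$. Workers are drawn independently, and within iteration $t$ the price $p_t$ and the active set $A_t$ are fixed. Note, however, that $A_t$ (and hence $q_t$) depends on which categories were removed earlier, so the $N_t$ are not literally independent. I will therefore first show the conditional statement and then obtain independence, or stochastic domination, by a coupling.

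\textbf{Expectation.} By the tower property,
\[
\mathbb{E}[N_t]=\mathbb{E}\bigl[\mathbb{E}[N_t\mid \mathcal{F}_{t-1}]\bigr]=\mathbb{E}[1/q_t]\le 1/q_*.
\]
Summing over $t=1,\dots,M$ gives $\mathbb{E}[{\rm WAIT}]\le M/q_*$. If $N_t$ is counted on $\{0,1,\dots\}$ rather than on $\{1,2,\dots\}$, the bound is only smaller.

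\textbf{Sub-exponential tails and concentration.} Conditionally on $\mathcal{F}_{t-1}$, a geometric variable with success probability $q_t\ge q_*$ is stochastically dominated by $G_t\sim{\rm Geom}(q_*)$. To make this concrete I will use a coupling. At each worker draw, let $U$ be uniform on $[0,1]$ and declare success if $U\le q_t$. Let $G_t$ count the draws until $U\le q_*$, using fresh uniforms in iteration $t$. Then $N_t\le G_t$ pointwise, and the $G_t$ are i.i.d.\ ${\rm Geom}(q_*)$. This is the precise sense in which the $N_t$ can be treated as independent sub-exponential variables whose parameters depend only on $q_*$.

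For the tails of a single $G_t$, we have $\Pr(G_t>k)=(1-q_*)^k=e^{-\theta k}$, and
\[
\mathbb{E}\bigl[e^{\lambda(G_t-\mathbb{E}G_t)}\bigr]
\]
is finite for $\lambda<\theta$. Hence $G_t-1/q_*$ is sub-exponential with parameters $(\nu,b)$ of order $1/q_*$, which depend only on $\theta$.

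Bernstein's inequality for sums of independent sub-exponential variables then gives
\[
\Pr\Bigl(\textstyle\sum_t G_t-M/q_*\ge s\Bigr)\le \exp\bigl(-c\min(s^2/(M\nu^2),\,s/b)\bigr).
\]
Setting the right-hand side equal to $\delta$ yields $s=O\bigl(\sqrt{M\log(1/\delta)}+\log(1/\delta)\bigr)$. Since ${\rm WAIT}\le\sum_t G_t$, with probability at least $1-\delta$ we get
\[
{\rm WAIT}=O\bigl(M+\sqrt{M\log(1/\delta)}+\log(1/\delta)\bigr).
\]

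\textbf{Main obstacle.} The main subtlety is the dependence of $q_t$ on the random history, including which category $i^*$ is removed. The coupling handles it: the domination holds conditionally at each step whatever the history, so it suffices to prove the tail bound for the i.i.d.\ dominating sum. Computing the explicit sub-exponential constants of the geometric distribution is routine; one can take $b\asymp 1/\theta$ and $\nu^2\asymp (1-q_*)/q_*^2$, up to absolute constants.
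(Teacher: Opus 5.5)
Your argument is correct, but it handles the dependence structure differently from the paper. The paper takes the $N_t$ to be independent with $N_t\sim\mathrm{Geom}(q_t)$. It writes $\mathbb{E}[\mathrm{WAIT}]=\sum_t 1/q_t\le M/q_*$, bounds $\|N_t-\mathbb{E}N_t\|_{\psi_1}\le C/q_*$ via Lemma~\ref{lem:geom-subexp-stage}, and applies Bernstein directly to the centered $N_t$. This gives two-sided (even relative) concentration of $\mathrm{WAIT}$ around its own mean. You instead observe, rightly, that when the $D_i$ differ, the price $p_t$ and hence $q_t$ depend on which categories were removed earlier. The $N_t$ are then only conditionally geometric given $\mathcal{F}_{t-1}$ and need not be independent. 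You get the mean by the tower property, and the tail by dominating each $N_t$ pathwise by an i.i.d.\ $\mathrm{Geom}(q_*)$ variable $G_t$. This is legitimate because the bound $q_t\ge q_*$ of Lemma~\ref{lemma:quantileprices} holds for every possible active set. Your route therefore does not rely on an independence claim that is literally valid only when $p_t$ is deterministic (e.g.\ identical $D_i$), and it needs only the elementary tail $\Pr(G_t>k)=e^{-\theta k}$. What it gives up is the lower tail, and centering at $\mathbb{E}[\mathrm{WAIT}]$ rather than at $M/q_*$; the stated high-probability upper bound requires neither. Accordingly, you prove the ``Moreover'' clause only as domination by independent sub-exponential variables. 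That is the right reading, but you should state explicitly that literal independence of the $N_t$ can fail. To make the coupling fully rigorous, realize each draw as a fresh uniform compared with the $\mathcal{F}_{t-1}$-measurable $q_t$, and draw the successful worker's profile (hence $i^*$) from the conditional law given success. This reproduces the law of the process, and each $G_t$ uses only the iteration-$t$ uniforms.
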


\begin{proof}[Proof sketch]
We use a standard result on Geometric random variables that are sub-exponential with $\psi_1$-norm bounded by a constant depending only on $q_{*}$ (shown in Lemma~\ref{lem:geom-subexp-stage} in the Appendix). We then use the independence of the wait times at every iteration to compute the total wait time in expectation as $\mathbb{E}[{\rm WAIT}(\textbf{p}^{\rm SWS})]\le M/q_{*}$. Finally, we show that the total wait time ${\rm WAIT}(\textbf{p}^{\rm SWS})=\sum_t N_t$ satisfies a Bernstein (sub-exponential) tail and we use Bernstein's inequality to prove the concentration bound.~\looseness=-1
\end{proof}

The $\textbf{p}^{\rm SWS}$ strategy has a natural interpretation through quantile access at the inverse number of active categories, which gives the constant lower bound of success in each iteration. When $M = 2$, $\textbf{p}^{\rm SWS}$ is pricing at the median for $\theta = 1$; as $M$ increases, access to lower quantile values allows the principal to price under the median while maintaining the wait time.
Figure~\ref{fig:totaltime_quantile_all} shows the total wait time under the $\textbf{p}^{\rm SWS}$ strategy, noting that the $O(M)$ time guarantee holds, with different constants depending on the exact valuations distributions.~\looseness=-1

\begin{figure}
  \centering
  \includegraphics[width=0.45\linewidth]{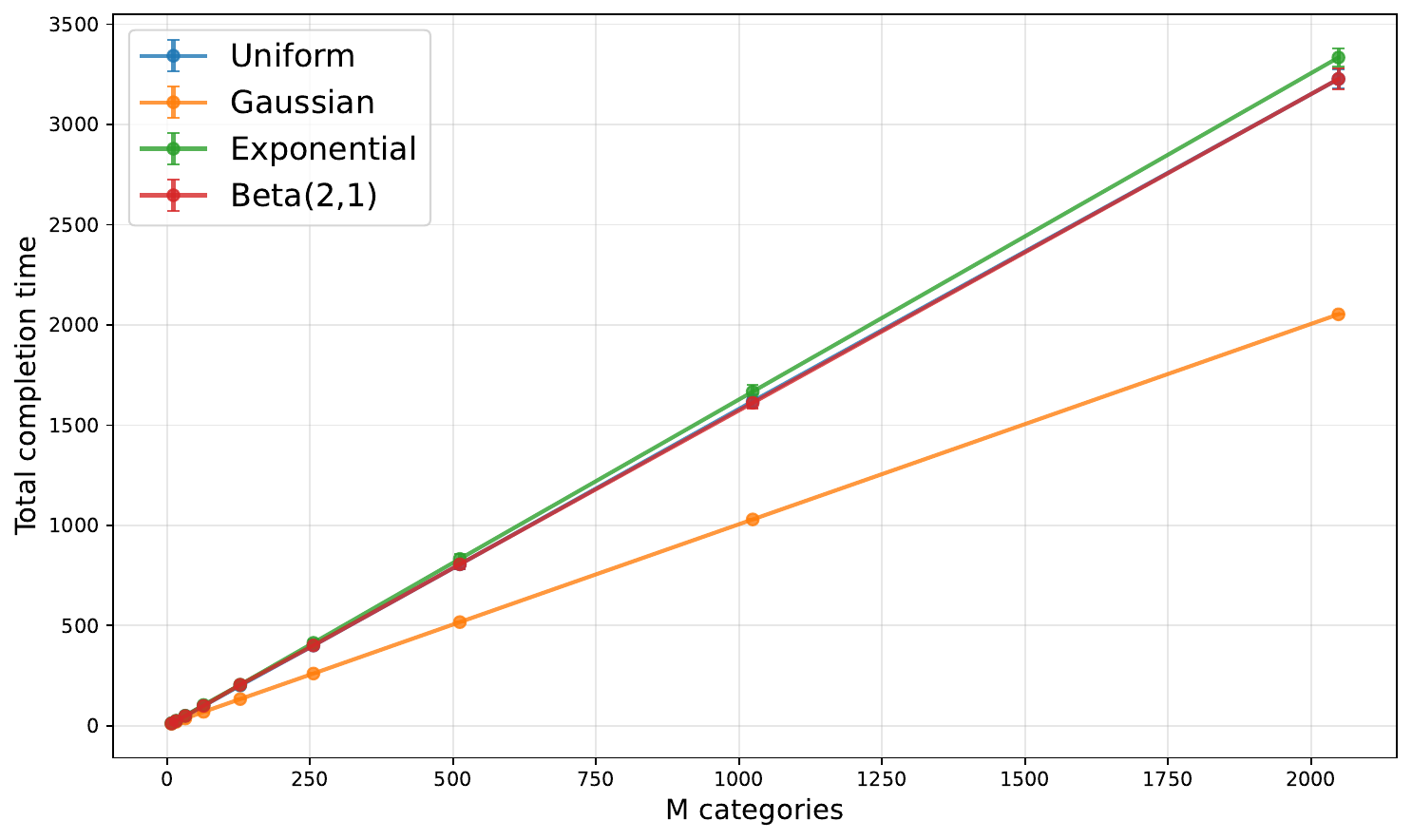}
  \caption{Total wait time for the $\mathbf{p}^{\rm SWS}$ pricing strategy for different distributions, as $M$ varies.} 
  \label{fig:totaltime_quantile_all}
\end{figure}

\subsection{Total payments}
While holding the expected wait time fixed, the stochastic wage suppression pricing strategy achieves a different total cost depending on the distribution of valuations among workers. 
In the following we use the characteristics outlined in Section~\ref{sec:char} to describe different cost regimes.~\looseness=-1

\medskip

\noindent \textbf{No wage suppression.} In coordinated markets where all workers have a minimum positive market price for each category $i \in [M]$ (e.g., there is a minimum wage requirement or there is no uncertainty or dispersion in the workers' costs), the principal pays a total price of $\Theta(M)$. We characterize this property as a sufficient condition on the total cost bound below: 

\begin{lemma}[Linear total cost]
\label{lemma:gap-linear}
For each category $i$, let $g_i := \inf\{x\in[0,1]: D_i(x)>0\}$ denote the left endpoint of the support (informally, the gap at zero price). There is a positive left gap if 
\begin{equation}
\label{eq:left-gap}
g_{*} \;:=\; \min_{i\in[M]} g_i \;>\; 0.
\end{equation}
For any distributions $\{D_i \}_{i\in [M]}$ with property~\eqref{eq:left-gap}, the $\textbf{p}^{\rm SWS}$ pricing strategy pays a total cost of $\Theta(M)$.
\end{lemma}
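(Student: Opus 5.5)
The plan is to sandwich each posted price $p_t$ between two constants independent of $M$, and then sum over the $M$ iterations. Since $\mathrm{COST}(\mathbf{p}^{\rm SWS})=\sum_{t=1}^M p_t$, a per-iteration bound $g_*\le p_t\le 1$ gives $g_* M\le \mathrm{COST}(\mathbf{p}^{\rm SWS})\le M$. This is $\Theta(M)$, provided $g_*$ does not depend on $M$. I read the hypothesis \eqref{eq:left-gap} in that way: the gap $g_*$ is a fixed positive constant, uniform over the family of distributions as $M$ grows.

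\textbf{Upper bound.} Valuations lie in $[0,1]$, so each $D_i(1)=1$. For any $u\in(0,1]$ the left quantile therefore satisfies $Q_i(u)=\inf\{x: D_i(x)\ge u\}\le 1$. Taking the maximum over $i\in A_t$ gives $p_t\le 1$ for every $t$, and hence $\mathrm{COST}(\mathbf{p}^{\rm SWS})\le M$.

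\textbf{Lower bound.} Fix $t$ and any $i\in A_t$, and set $u=\theta/|A_t|$. Then $u>0$, because $\theta\in(0,1]$ and $|A_t|=M-t+1\le M$. For every $x<g_i$, the definition of $g_i$ as the infimum of $\{x: D_i(x)>0\}$ gives $D_i(x)=0<u$. So no such $x$ belongs to the set $\{x: D_i(x)\ge u\}$, and its infimum satisfies $Q_i(u)\ge g_i\ge g_*$. Consequently $p_t=\max_{i\in A_t}Q_i(u)\ge g_*$. Summing over $t$ gives $\mathrm{COST}(\mathbf{p}^{\rm SWS})\ge g_* M$.

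The only step that needs care is this quantile lower bound. One must check that a strictly positive quantile level $u$ cannot be attained strictly below the left endpoint of the support, which is exactly what $D_i\equiv 0$ on $[0,g_i)$ guarantees. A related point is that the number of active categories $|A_t|$ never reaches $0$ while a price is posted, so $u$ is always well defined and positive. The rest is summation. It is also worth noting that the wait-time guarantee of Theorem~\ref{thm:quantimeconcentration} holds simultaneously, so the linear cost is not an artifact of slow completion. In a coordinated market the principal therefore cannot suppress wages below order $g_*M$ while keeping $O(M)$ wait time.
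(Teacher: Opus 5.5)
Your proof is correct and matches the paper's argument: the lower bound $Q_i(u)\ge g_i\ge g_*$ for $u>0$ because $D_i\equiv 0$ below $g_i$, and the upper bound $p_t\le 1$ (which the paper gets via $Q_i(\theta/(M-t+1))\le Q_i(\theta)\le 1$), each summed over the $M$ iterations. Your explicit remark that $g_*$ must be a positive constant independent of $M$ matches the paper's implicit assumption that the constants depend only on $g_*$ and $\theta$.
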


Examples of such distributions include point-mass distributions at $p_j > 0$ for each category $j$ (e.g., a collective contract that sets wages in each market segment) or truncated distributions for which there is no worker density below a price floor $p >0$. Such markets can occur naturally when the workers do not have uncertainty over their costs. 

\medskip 

\noindent \textbf{Sublinear wage suppression regime.} Empirical research shows that markets in the gig economy are not often coordinated. Without the positive left gap property, we transition to a regime where the total cost will achieve a sublinear bound:

\begin{lemma}[Sublinear total cost]
If for each $i \in [M]$,
\begin{equation}
\inf\{x\in[0,1] : D_i(x) > 0\} = 0,
\label{prop:gapless}
\end{equation}
the $\mathbf{p}^{\rm SWS}$ pricing strategy pays a total cost of $o(M)$.~\looseness=-1
\label{lemma:gapless}
\end{lemma}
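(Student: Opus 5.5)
The plan is to rewrite the total cost as an average of quantile values evaluated at vanishing levels, and then use a Cesàro-type argument. By Definition~\ref{def:SWS}, $|A_t| = M-t+1$, so reindexing by $k=|A_t|$ gives
\begin{equation*}
\mathrm{COST}(\mathbf{p}^{\rm SWS}) \;=\; \sum_{k=1}^{M} \max_{i\in A^{(k)}} Q_i\!\Big(\frac{\theta}{k}\Big) \;\le\; \sum_{k=1}^{M} h\!\Big(\frac{\theta}{k}\Big),
\qquad h(u) := \sup_{i} Q_i(u).
\end{equation*}
Here $A^{(k)}$ denotes the active set when $k$ categories remain. Since valuations lie in $[0,1]$, we have $0\le h\le 1$. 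It therefore suffices to show that $h(u)\to 0$ as $u\to 0^+$. Granting this, for any $\varepsilon>0$ choose $K$ such that $h(\theta/k)\le\varepsilon$ for all $k\ge K$. Then $\mathrm{COST}\le K + \varepsilon M$, so $\limsup_M \mathrm{COST}/M \le \varepsilon$. Since $\varepsilon$ is arbitrary, the cost is $o(M)$.

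The pointwise ingredient follows from condition~\eqref{prop:gapless}. Fix a category $i$ and $\varepsilon>0$. Because the left endpoint of the support is $0$, we have $D_i(\varepsilon)>0$. By the definition of the left quantile, $Q_i(u)\le\varepsilon$ for every $u\le D_i(\varepsilon)$. Hence $Q_i(u)\to 0$ as $u\to0$ for each fixed $i$.

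The main obstacle is passing from this pointwise statement to $h(u)\to 0$. We need the threshold $u_\varepsilon := \inf_i D_i(\varepsilon)$ to be strictly positive. Without it, the claim can fail: an adversarial sequence of categories could have $D_i$ supported near $0$ only with mass much smaller than $\theta/M$, which would force prices of order one. I would therefore make the uniformity explicit, reading \eqref{prop:gapless} uniformly over the family as $\inf_i D_i(\varepsilon)>0$ for every $\varepsilon>0$. This holds automatically in the natural cases, such as identical category distributions, finitely many distribution types, or distributions dominating a common gapless distribution, because a minimum over finitely many positive numbers is positive. Under this reading, $h(u)\le\varepsilon$ for all $u\le u_\varepsilon$, and the Cesàro step above completes the proof. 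By Theorem~\ref{thm:quantimeconcentration}, the wait-time guarantee is unaffected.
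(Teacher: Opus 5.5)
Your proposal is correct and follows essentially the same route as the paper: bound the finitely many early terms by a constant, bound all later terms by $\varepsilon$ using $Q_i(u)\to 0$, and then let $\varepsilon\to 0$. The uniformity condition $\inf_i D_i(\varepsilon)>0$ that you make explicit is exactly what the paper's proof uses tacitly when it treats $\delta=\min_{i\in[M]}\delta_i$ (and hence $J$ and $C(\varepsilon)$) as independent of $M$, and you are right that without it the lemma can fail as literally stated.
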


A large array of distributions fall under this regime. For example, CDFs with polynomially vanishing density at $0$ such as a power left tail: $\exists a > 1, K > 0$ s.t. $D(x) = K x^a(1 + o(1))$ as $x \rightarrow 0$. A specific example is the ${\rm Beta}(a>1,1)$ distribution, noting that varying $a$ achieves a sublinear bound $\Theta(M^{1 - 1/a})$ with any exponent (see Observation~\ref{obs:sublinear} in the Appendix for computation and further examples). As such, a positive density at $0$ implies a phase transition in the total wages to become sublinear.~\looseness=-1

\medskip

\noindent \textbf{Logarithmic wage suppression regime.} A further transition occurs when the left tail grows linearly, as a special case of the aforementioned polynomial; in this regime, the stochastic wage suppression strategy achieves a total cost bound of $O(\log(M))$. As described in Section~\ref{sec:prelim}, this regime can be exemplified as \emph{high dispersion markets} where workers have high uncertainty over their costs.
We state a distribution-agnostic sufficient regularity condition that allows a strategic principal to achieve a cost of $O(\log M)$ while preserving the linear wait time guarantee. Informally, this condition implies a ``linear left-tail growth,'' i.e., $D_i(x)\gtrsim x$ near $0$.~\looseness=-1 

\begin{lemma}[Logarithmic total cost]
    We denote by left-tail regularity the following condition: there exist constants $c>0$ and $x_0\in(0,1]$ such that for all $i\in[M]$ and all $x\in[0,x_0]$, 
    \begin{equation}
    D_i(x)\;\ge\; c\,x.
    \label{condition:left}
    \end{equation}
    Equivalently, for all $u\in[0,cx_0]$, $Q_i(u)\le u/c$, or $\mathbb{P}(v_i < 1/t) = \Omega(1/t), \forall t \in [M]$.
    For any distributions $\{ D_i \}_{i \in [M]}$ with property~\ref{condition:left}, the $\textbf{p}^{\rm SWS}$ pricing strategy pays a total cost of $O(\log(M))$. 
    \label{lemma:logm}
\end{lemma}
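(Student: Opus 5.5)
The plan is to bound each posted price directly through the quantile form of the left-tail regularity condition and then sum a harmonic series. Recall that under $\mathbf{p}^{\rm SWS}$ the price at iteration $t$ is $p_t=\max_{i\in A_t}Q_i(\theta/|A_t|)$ with $|A_t|=M-t+1$. Reindexing by $k:=|A_t|$, which runs over $k=M,M-1,\dots,1$, gives
\[
\mathrm{COST}(\mathbf{p}^{\rm SWS})=\sum_{k=1}^{M}\max_{i\in A^{(k)}}Q_i(\theta/k),
\]
where $A^{(k)}$ is the active set when $k$ categories remain. The cost is therefore a deterministic function of the quantiles, and the randomness of which categories are removed plays no role once we bound the maximum by a bound uniform in $i\in[M]$.

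\emph{Step 1: convert \eqref{condition:left} into a quantile bound.} Fix $i$ and $u\in(0,cx_0]$. Set $x:=u/c\le x_0$. By \eqref{condition:left}, $D_i(x)\ge cx=u$, so $x$ belongs to the set $\{x:D_i(x)\ge u\}$. Hence $Q_i(u)=\inf\{x:D_i(x)\ge u\}\le u/c$. This is the stated equivalence, and it holds uniformly in $i$ because $c,x_0$ do not depend on $i$ or $M$.

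\emph{Step 2: split the sum at a threshold.} The bound of Step 1 applies only when $\theta/k\le cx_0$, that is, when $k\ge k_0:=\lceil\theta/(cx_0)\rceil$, a constant independent of $M$. We treat the two ranges of $k$ separately.
\begin{itemize}
\item For $k\ge k_0$, we have $\max_i Q_i(\theta/k)\le \theta/(ck)$.
\item For $k<k_0$, we use the trivial bound $p_t\le 1$, since valuations lie in $[0,1]$. There are at most $k_0-1$ such terms.
\end{itemize}
Combining the two ranges,
\[
\mathrm{COST}(\mathbf{p}^{\rm SWS})\le (k_0-1)+\frac{\theta}{c}\sum_{k=1}^{M}\frac1k\le k_0+\frac{\theta}{c}(1+\ln M)=O(\log M),
\]
with constants depending only on $c,x_0,\theta$.

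\emph{Step 3: verify the bound is uniform.} We confirm that nothing depends on $M$ beyond the harmonic sum. In particular, the constants in \eqref{condition:left} must be uniform over all $i\in[M]$ as $M$ grows, and the statement assumes this.

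The only delicate point is the threshold in Step 2. Without it, the quantile bound $Q_i(u)\le u/c$ fails for large $u$, which corresponds to the last few iterations when few categories remain. Those iterations are absorbed by the trivial bound $p_t\le1$, at constant cost. Everything else is routine. The wait-time guarantee is unaffected and follows from Theorem~\ref{thm:quantimeconcentration}.
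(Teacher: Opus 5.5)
Your proof is correct and follows essentially the same argument as the paper. Both convert the left-tail condition into the quantile bound $Q_i(u)\le u/c$ and reindex the sum by the number of active categories. Both then split at a constant threshold (your $k_0=\lceil\theta/(cx_0)\rceil$ corresponds to the paper's $\theta t_0/c$ with $x_0=1/t_0$), bound the finitely many remaining terms by a constant, and sum the harmonic tail to obtain $O(\log M)$.
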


Examples of distributions that fall under property~\ref{condition:left} are the uniform distribution, the exponential distribution, and any left-skewed power law with a heavy tail, such as ${\rm Beta}(1, b)$ for $b < 1$ (see Observation~\ref{obs:disperseddistributions} in the Appendix for exact computations). 


\noindent \textbf{Constant wage suppression regime.} A further phase transition occurs where there is a significant mass concentrated near $0$ (so, most workers will have valuations arbitrarily close to $0$): in this regime, the principal's pricing strategy suppresses wages down to a total cost of $O(1)$ with respect to $M$. This regime is emblematic of distributions exhibiting a \emph{right-skew} described in Section~\ref{sec:prelim}, which characterizes \emph{vulnerable markets}. We express a formal criterion below, with a generalization in the Appendix (Proposition~\ref{prop:O1-iff}).~\looseness=-1

\begin{lemma}[Constant total cost]
    If distributions $D_i$ have a heavier than linear left tail, such as a polynomial, $\exists x_0 \in (0,1], b \in (0,1), c > 0$ s.t. $\forall i\in [M], x \in [0, x_0]$ 
    \begin{equation}
    D_i(x)\;\ge\; c\,x^b.
    \label{condition:poly}
    \end{equation}
    Equivalently, for all $u\in[0,cx_0]$, $Q_i(u)\le (u/c)^{1/b}$, or $\mathbb{P}(v_i < 1/t) = \Omega(t^{-b}), \forall t \in [M]$.
    For any distributions $\{ D_i \}_{i \in [M]}$ with property~\ref{condition:poly}, the $\textbf{p}^{\rm SWS}$ strategy pays a total cost of $O(1)$ with respect to $M$. 
    \label{lemma:constant}
\end{lemma}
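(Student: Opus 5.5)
The plan is to bound each posted price directly through the quantile condition and then sum. By Definition~\ref{def:SWS}, the price at iteration $t$ is $p_t=\max_{i\in A_t}Q_i(\theta/k)$, where $k:=|A_t|=M-t+1$ runs over $1,\dots,M$ as $t$ does. So
\[
\mathrm{COST}(\mathbf{p}^{\rm SWS})=\sum_{k=1}^{M}\max_{i}Q_i(\theta/k).
\]

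Fix the threshold $k_0:=\lceil \theta/(c x_0)\rceil$, which depends only on $\theta,c,x_0$ and not on $M$. I split the sum at $k_0$.

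For $k<k_0$, I use only that valuations lie in $[0,1]$, so every quantile is at most $1$. These terms contribute at most $k_0$, a constant.

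For $k\ge k_0$, we have $u=\theta/k\le c x_0$, so the equivalent quantile form of condition~\eqref{condition:poly} applies uniformly in $i$:
\[
Q_i(\theta/k)\le \left(\frac{\theta}{ck}\right)^{1/b}.
\]
I will justify this equivalence quickly. Set $x=(u/c)^{1/b}\le x_0$. Then $D_i(x)\ge c x^b=u$, so by definition of the left quantile $Q_i(u)\le x$. Because this bound is uniform over $i$, the maximum over $A_t$ obeys the same bound.

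The key step is the summation. Since $b\in(0,1)$, the exponent $1/b$ is strictly greater than $1$, so
\[
\sum_{k\ge k_0}\left(\frac{\theta}{ck}\right)^{1/b}\le \left(\frac{\theta}{c}\right)^{1/b}\zeta(1/b)<\infty,
\]
a convergent $p$-series bounded independently of $M$. Adding the two pieces gives
\[
\mathrm{COST}\le k_0+\left(\frac{\theta}{c}\right)^{1/b}\zeta(1/b)=O(1).
\]

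I do not expect a real obstacle. The only care points are checking that $k_0$ does not depend on $M$, and handling the small-$k$ regime where the tail condition does not apply; the trivial bound $Q_i\le 1$ covers that regime. Contrasting with Lemma~\ref{lemma:logm}, where $b=1$ yields the divergent harmonic sum and hence $\log M$, confirms that the strict inequality $b<1$ is exactly what makes the sum converge.
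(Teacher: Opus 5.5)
Your proof is correct and follows the same route as the paper: you convert the tail condition into the quantile bound $Q_i(\theta/k)\le(\theta/(ck))^{1/b}$ and then sum a convergent $p$-series with exponent $1/b>1$. Your explicit split at $k_0=\lceil\theta/(cx_0)\rceil$, with the trivial bound $Q_i\le 1$ on the finitely many initial terms, is slightly more rigorous than the paper, which applies the tail bound to every $j$ even though it only holds when $\theta/j\le cx_0$.
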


In addition, any distribution with an atom at $0$ will also satisfy the same properties from Lemma~\ref{lemma:constant}.~\looseness=-1 

We remark on the phase transition that the principal's strategy achieves: without collective action, pricing at the quantiles obtains sublinear total cost for the principal; depending on the exact distribution family, the total cost can vary from any sublinear bound, down to a logarithmic, and even a constant wage suppression regime as a function of $M$. These regimes mirror reported empirical properties of workers' valuations distributions in the gig economy. We provide sufficient conditions to characterize the principal's ability to suppress wages in different regimes. Figure~\ref{fig:totalcompletiontime_all_quantile} presents the total cost that $\textbf{p}^{\rm SWS}$ achieves for different regimes (top row, in blue for $\alpha = 0$, equivalent to no collective action). Different wage suppression regimes are exemplified through different distributions: $O(1)$ in panel (a), $O(\log(M))$ in panels (b)-(c) and $O(M^{1/2})$ in panel (d) for the total cost.~\looseness=-1

\begin{remark}[A note on cost optimality]
    The $\textbf{p}^{\rm SWS}$ strategy is cost-optimal among all strategies satisfying the per-iteration success probability constraint $q_t \geq q^*$ when distributions are identical across categories. This follows directly: with $|A_t| = M - t + 1$ categories at iteration $t$, the minimum price ensuring $q_t \geq q^*$ is $p_t = Q\left(1 - (1- q^*)^{1/|A_t|} \right)$, which is exactly the price defined by the $\textbf{p}^{\rm SWS}$ strategy. When these distributions come from a heterogeneous family of distributions, the principal may further optimize his cost by exploiting the heterogeneity (e.g., via contextual bandits~\citep{lykouris2025contextual}). Methods for optimizing the absolute cost of the principal are left for future work. 
\end{remark}

%% file: sec4-randomalphacoordination.tex
\section{Horizontal collective action fails to prevent wage suppression}
\label{sec:randomcollact}

In this section, we explore \textbf{untargeted} collective action strategies: in these strategies, a random fraction of the workers coordinate on a price floor (similar to the process described in~\citet{hardt2023algorithmic}), essentially only accepting tasks that are offered by the principal above a certain price. If all workers fully coordinate on all categories, that imposes a minimum price floor and thus prevents wage suppression as the total cost is $\Theta(M)$. Here, we are interested in the power of the collective as a function of its size. 

\paragraph{Horizontal $\alpha$-fraction coordination.}
Let $\alpha\in[0,1)$ denote the participation rate in the collective. A collective sets a category-specific minimum price $\overline{p}_j, \forall j \in [M]$. Since each arriving worker independently belongs to the collective with probability $\alpha$,\footnote{Note: this is consistent with sampling with replacement from a population with collective fraction $\alpha$.} we model their valuations as drawn from the mixture distribution:
\begin{equation}
v_j \;\sim\; (1-\alpha)\cdot D_j \;+\; \alpha \cdot \delta_{\overline{p}_j},
\end{equation}
where $\delta_{\overline{p}_j}$ is a point mass at the collective-set price $\overline{p}_j\in[0,1]$. Then, the CDF of this mixture distribution is
\begin{equation}
  D_j^{\mathrm{mix}}(x) \;=\; (1-\alpha)\cdot D_j(x) \;+\; \alpha\cdot \mathbf{1}\{x\ge \overline{p}_j\},\qquad x\in[0,1].
\end{equation}

We show that under this data-generating distribution, the stochastic wage suppression pricing strategy achieves a wait time for the principal of $O(M)$ with high probability, no matter the size of the collective (the value of $\alpha$). Essentially, the random collectives \emph{thin} the arrival rate of successful workers, but leave a positive mass of workers who take up a task at low prices which can continue to be exploited by a principal with only a constant factor in the wait time.~\looseness=-1

\begin{lemma}
Let $\mathbf{p}_{\alpha}^{\mathrm{SWS}}$ denote the SWS pricing strategy applied to the mixture distributions $\{D_j^{\mathrm{mix}}\}_{j\in[M]}$.
Then
\begin{equation}
\begin{gathered}
\mathbb{E}[{\rm WAIT}(\mathbf{p}_{\alpha}^{\rm SWS})]\;\le\;\frac{M}{q_{*}(\alpha)} \;=\; O(M), \\
{\rm WAIT}(\mathbf{p}_{\alpha}^{\rm SWS}) \;=\; O\!\Big(M+\sqrt{M\log(1/\delta)}+\log(1/\delta)\Big)
\end{gathered}
\end{equation}
with probability at least $1-\delta$, where $q_{*}(\alpha) := 1-\exp\!\bigl(- (1-\alpha)\,\theta\bigr)$ and with constants depending only on $\theta$ and $\alpha$.
\label{lemma:horizontaltime}
\end{lemma}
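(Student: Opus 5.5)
The plan is to reduce the statement to the argument behind Lemma~\ref{lemma:quantileprices} and Theorem~\ref{thm:quantimeconcentration}, with one new ingredient: a uniform lower bound on the per-iteration success probability under the mixture. I aim to show that for every $t$ and every active set $A_t$ with $|A_t|=M-t+1$,
\[
q_t \;=\; 1-\prod_{j\in A_t}\bigl(1-D_j^{\mathrm{mix}}(p_t)\bigr)\;\ge\; 1-\Bigl(1-\tfrac{(1-\alpha)\theta}{|A_t|}\Bigr)^{|A_t|}\;\ge\; 1-e^{-(1-\alpha)\theta}\;=\;q_*(\alpha).
\]

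The key step is the per-category bound $D_j^{\mathrm{mix}}(p_t)\ge (1-\alpha)\theta/|A_t|$ for every active $j$. I would prove it so that it covers both readings of how the principal prices.
\begin{itemize}
\item If $p_t=\max_{i\in A_t}Q_i^{\mathrm{mix}}(\theta/|A_t|)$, then $p_t\ge Q_j^{\mathrm{mix}}(\theta/|A_t|)$. The mixture CDF is right-continuous, since it is a convex combination of right-continuous CDFs. Hence $D^{\mathrm{mix}}_j(Q^{\mathrm{mix}}_j(u))\ge u$ for the left quantile, which gives $D_j^{\mathrm{mix}}(p_t)\ge\theta/|A_t|\ge(1-\alpha)\theta/|A_t|$.
\item If instead the principal prices at the quantiles of the original $D_j$, then the same right-continuity argument gives $D_j(p_t)\ge\theta/|A_t|$. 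Dropping the nonnegative atom term then gives $D_j^{\mathrm{mix}}(p_t)\ge(1-\alpha)D_j(p_t)\ge(1-\alpha)\theta/|A_t|$.
\end{itemize}
This second case is exactly the ``thinning'' intuition: the collective only removes a $(1-\alpha)$ fraction of the low-valuation mass. From the per-category bound, independence of the coordinates $v_j$ across categories (each a draw from a mixture) gives the product formula. The inequality $(1-y/x)^x\le e^{-y}$ with $y=(1-\alpha)\theta$ finishes the bound on $q_t$.

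Next I handle the wait time. Conditional on the history up to the start of iteration $t$, workers are drawn i.i.d. and the posted price is fixed, so $N_t$ is geometric with parameter $q_t$. One subtlety is that $q_t$ depends on which categories remain, and that set is random. I would handle this by noting that the bound $q_t\ge q_*(\alpha)$ is deterministic, holding for every possible $A_t$ of the given size. Therefore each $N_t$ is stochastically dominated, conditionally on the past, by a $\mathrm{Geom}(q_*(\alpha))$ variable. Summing the conditional expectations yields $\mathbb{E}[\mathrm{WAIT}]\le M/q_*(\alpha)$. For concentration, I would invoke Lemma~\ref{lem:geom-subexp-stage}: a geometric variable with parameter at least $q_*(\alpha)$ is sub-exponential with $\psi_1$-norm bounded by a constant depending only on $q_*(\alpha)$, hence only on $\theta$ and $\alpha$. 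I would then either couple the $N_t$ to independent $\mathrm{Geom}(q_*(\alpha))$ variables or run Bernstein's inequality for martingale differences, since the conditional moment generating function bounds hold uniformly. Either route gives $\mathrm{WAIT}=O(M+\sqrt{M\log(1/\delta)}+\log(1/\delta))$ with probability at least $1-\delta$, exactly as in Theorem~\ref{thm:quantimeconcentration}.

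The main obstacle I expect is the rigor of this last step rather than the computation. Theorem~\ref{thm:quantimeconcentration} asserts independence of the $N_t$, but here $q_t$ is history-dependent. My first attempt would be the explicit coupling. At each iteration, generate the worker sequence from uniforms $U_{t,1},U_{t,2},\dots$ and declare a \emph{dominating success} when $U_{t,k}\le q_*(\alpha)$. Arrange the coupling so that this event implies an actual success, which is possible because the true success probability is $q_t\ge q_*(\alpha)$. Then $N_t$ is bounded above by independent geometrics, and Bernstein's inequality applies to their sum directly. A smaller point to check is that the atom at $\overline{p}_j$ does not break the quantile inequality $D(Q(u))\ge u$. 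It does not, since that inequality only requires right-continuity of the CDF.
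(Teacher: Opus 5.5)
Your proposal is correct and follows the same route as the paper: a deterministic lower bound $q_t\ge q_*(\alpha)$ on each iteration's success probability, followed by the geometric mean bound and Bernstein's inequality exactly as in Theorem~\ref{thm:quantimeconcentration}. It is in fact more careful than the paper's proof in two places. The paper works only with the uniform family via $D_i^{\mathrm{mix}}(p_t)\ge(1-\alpha)p_t$, and turning that into $q_*(\alpha)$ relies on $p_t\ge\theta/|A_t|$; this can fail for small $|A_t|$ under mixture-quantile pricing because the atom is discarded, whereas your step $D_j^{\mathrm{mix}}(Q_j^{\mathrm{mix}}(u))\ge u$ avoids this and covers general $D_j$. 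The paper also simply asserts that the $N_t$ are independent, whereas your coupling to i.i.d.\ $\mathrm{Geom}(q_*(\alpha))$ variables handles the dependence of $q_t$ on the random active set when categories are heterogeneous.
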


\begin{proof}[Proof sketch]
We note that the success probability in each iteration will still be lower bounded by a constant, which now depends on $\alpha$. This gives the same computation for the expected wait time, scaled by an $\alpha$-dependent constant, and the concentration inequality follows exactly like in Theorem~\ref{thm:quantimeconcentration}.
\end{proof}

Furthermore, the effect of the random collective only affects the principal's strategy by a constant in the total cost for any $\alpha < 1$:~\looseness=-1

\begin{lemma}
Fix $\alpha \in [0,1)$ and $\theta>0$, and let $\overline p^{-} := \min_{i\in[M]} \overline p_i$ denote the minimum collective price floor (assumed positive).
Then, across all market regimes analyzed in Section~\ref{sec:nocoordination},
\begin{equation}
{\rm COST}(\mathbf{p}_{\alpha}^{\rm SWS}) \;=\; \Theta\!\bigl({\rm COST}(\mathbf{p}^{\rm SWS})\bigr)
\end{equation}
as a function of the number of categories $M$.
\label{lemma:randomcollectivecost}
\end{lemma}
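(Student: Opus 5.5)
The argument is a quantile comparison: below the collective floor, the mixture is just a rescaled copy of the original distribution, so its quantiles differ only by a constant rescaling of the quantile level, and each regime's cost bound is insensitive to such a constant.

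\textbf{Quantile relation.} For $x<\overline p_j$ we have $D_j^{\mathrm{mix}}(x)=(1-\alpha)D_j(x)$. Hence, whenever $u$ is small enough that the relevant quantile lies below $\overline p^-$,
\begin{equation*}
Q_j^{\mathrm{mix}}(u)=Q_j\!\Big(\frac{u}{1-\alpha}\Big).
\end{equation*}
The SWS price at iteration $t$ under the mixture is $\max_{i\in A_t}Q_i^{\mathrm{mix}}(\theta/|A_t|)$. So as long as $\max_i Q_i(\theta'/|A_t|)<\overline p^-$ with $\theta':=\theta/(1-\alpha)$, the mixture price equals the original SWS price with $\theta$ replaced by $\theta'$.

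\textbf{Splitting the sum.} The condition above can fail only when $|A_t|$ is small. Let $k_0$ be the threshold below which it may fail. Then at most $k_0$ iterations remain, and each pays at most $1$. In the regimes of interest ($o(M)$ and the finer ones), $Q_i(u)\to0$ as $u\to 0$ uniformly in $i$. This holds under the tail conditions \eqref{condition:left} and \eqref{condition:poly}, and in the gapless case it is what the earlier lemmas effectively assume. Uniformity is what makes $k_0$ a constant independent of $M$. We therefore get
\begin{equation*}
{\rm COST}(\mathbf p^{\rm SWS}_\alpha)={\rm COST}_{\theta'}(\mathbf p^{\rm SWS})+O(1),
\end{equation*}
where ${\rm COST}_{\theta'}$ denotes the original strategy run with parameter $\theta'$. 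Two technical points need care. First, if $\theta'>1$, quantiles at levels above $1$ only occur for the first $O(1)$ values of $|A_t|$, and those are absorbed into the $O(1)$ term. Second, $\theta'$ must be reconciled with $\theta$. Monotonicity of quantiles gives ${\rm COST}_\theta\le{\rm COST}_{\theta'}$, and hence the lower bound ${\rm COST}(\mathbf p^{\rm SWS}_\alpha)\ge{\rm COST}(\mathbf p^{\rm SWS})-O(1)$ holds directly.

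\textbf{Regime-by-regime upper bound.} Rescaling $\theta$ by a constant changes the cost only by a constant factor in each regime.
\begin{itemize}
\item \textbf{Linear regime.} Under \eqref{eq:left-gap} the cost is trivially $\Theta(M)$ either way: it is bounded above by $M$ and below by $\min(g_*,\overline p^-)M$.
\item \textbf{Logarithmic regime.} Under \eqref{condition:left}, ${\rm COST}\le\sum_k \theta'/(ck)=O(\log M)$. The matching lower bound holds whenever the original cost is $\Theta(\log M)$, because the mixture's quantile is at least $Q_j(u)$.
\item \textbf{Constant regime.} Under \eqref{condition:poly}, ${\rm COST}\le\sum_k(\theta'/(ck))^{1/b}=O(1)$.
\item \textbf{Power tail $D(x)\sim Kx^a$.} Here $Q(u)\sim(u/K)^{1/a}$, so multiplying $u$ by $1/(1-\alpha)$ multiplies the cost by $(1-\alpha)^{-1/a}$, and the cost remains $\Theta(M^{1-1/a})$.
\end{itemize}

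\textbf{Main obstacle.} The hard part is making the statement ``$\Theta$ in all regimes'' precise for general gapless distributions. Under the bare condition \eqref{prop:gapless}, $Q(cu)$ and $Q(u)$ need not be comparable. My plan is therefore to state the comparison under a doubling-type condition, $Q_i(2u)\le CQ_i(u)$ for small $u$. This condition holds in every concrete regime treated in Section~\ref{sec:nocoordination}, and iterating it $\lceil\log_2(1/(1-\alpha))\rceil$ times gives ${\rm COST}_{\theta'}\le C'\,{\rm COST}_\theta$. Absent such a condition, I would fall back to the weaker conclusion that the mixture cost remains in the same regime: $o(M)$, $O(\log M)$, or $O(1)$, respectively.
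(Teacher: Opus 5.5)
Your decomposition is the paper's: the piecewise mixture quantile with $Q_i^{\mathrm{mix}}(u)=Q_i(u/(1-\alpha))$ below the floor, an $M$-independent number of iterations where the floor or upper branch is active (absorbed into $O(1)$), and a rescaled bulk sum compared regime by regime. You diverge only at the last step, and there the ``main obstacle'' you flag is not real. So the doubling hypothesis and the $o(M)$ fallback are unnecessary.

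You never need $Q(cu)\lesssim Q(u)$ pointwise. You only need to compare two sums, and monotonicity of $Q_{\max}$ does that by reindexing. Set $r:=\lceil 1/(1-\alpha)\rceil$ and $m:=\lfloor k/r\rfloor$. For $k\ge r$ you have $\theta'/k\le r\theta/k\le \theta/m$, and each $m$ arises from at most $r$ values of $k$. Hence
\[
\sum_{k=1}^{M}Q_{\max}\Big(\frac{\theta'}{k}\Big)\;\le\;(r-1)+r\sum_{m=1}^{\lfloor M/r\rfloor}Q_{\max}\Big(\frac{\theta}{m}\Big)\;\le\;r\,\mathrm{COST}_{\theta}+r.
\]
This is the discrete form of the paper's change of variables $x\mapsto(1-\alpha)x$. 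The factor $1/(1-\alpha)$ comes from dilating the iteration index, not from rescaling the quantile function.

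Combine this with your lower bound $\mathrm{COST}_\alpha\ge\mathrm{COST}-O(1)$. You get $\mathrm{COST}_\alpha=\Theta(\mathrm{COST})$ whenever $\mathrm{COST}\to\infty$, with no doubling hypothesis, in the bare gapless regime included. The only extra ingredient is the uniform-in-$i$ convergence $Q_i(u)\to 0$, which you correctly identify as what keeps $k_0$ independent of $M$.

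Two smaller remarks. First, your power-tail constant $(1-\alpha)^{-1/a}$ is right. It shows that the paper's explicit sandwich $\frac{1}{1-\alpha}\mathrm{COST}\pm O(1)$ is not literally correct for $a>1$, since the discrepancy is $\Theta(M^{1-1/a})$. The index-dilation view gives the accurate form $\mathrm{COST}_\alpha(M)\approx\frac{1}{1-\alpha}\mathrm{COST}((1-\alpha)M)$, and the $\Theta$ conclusion is unaffected.

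Second, in the constant regime the bound $\mathrm{COST}-O(1)$ is vacuous. There $\Theta$ needs a separate argument that $\mathrm{COST}_\alpha$ stays bounded away from $0$, for instance from the prices in the final iterations. The paper also passes over this point.
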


\begin{proof}[Proof sketch]
    The mixture distribution forces the stochastic wage suppression strategy to increase prices, as the quantile values shift: the new quantile values are either the original quantile values of the original distributions $D_i$ shifted by a factor of $1/(1-\alpha)$ (when the worker is not part of the collective), or equal to the minimal price floor $p^{-}$. In total, this shifts the total cost by a factor of $1/(1-\alpha)$ that follows from quantile reweighting under the mixture distribution and a constant addition for the collective workers. For the regimes described in Section~\ref{sec:nocoordination}, this translates to ${\rm COST}(\mathbf{p}_{\alpha}^{\rm SWS}) = \frac{1}{1-\alpha}{\rm COST}(\mathbf{p}^{\rm SWS})+ O(1)$ (full proof in the Appendix). Thus, the total cost will undergo the same phase transitions as described in the case without any collective action.~\looseness=-1
\end{proof}

\begin{figure}
  \centering
  \includegraphics[width=0.45\linewidth]{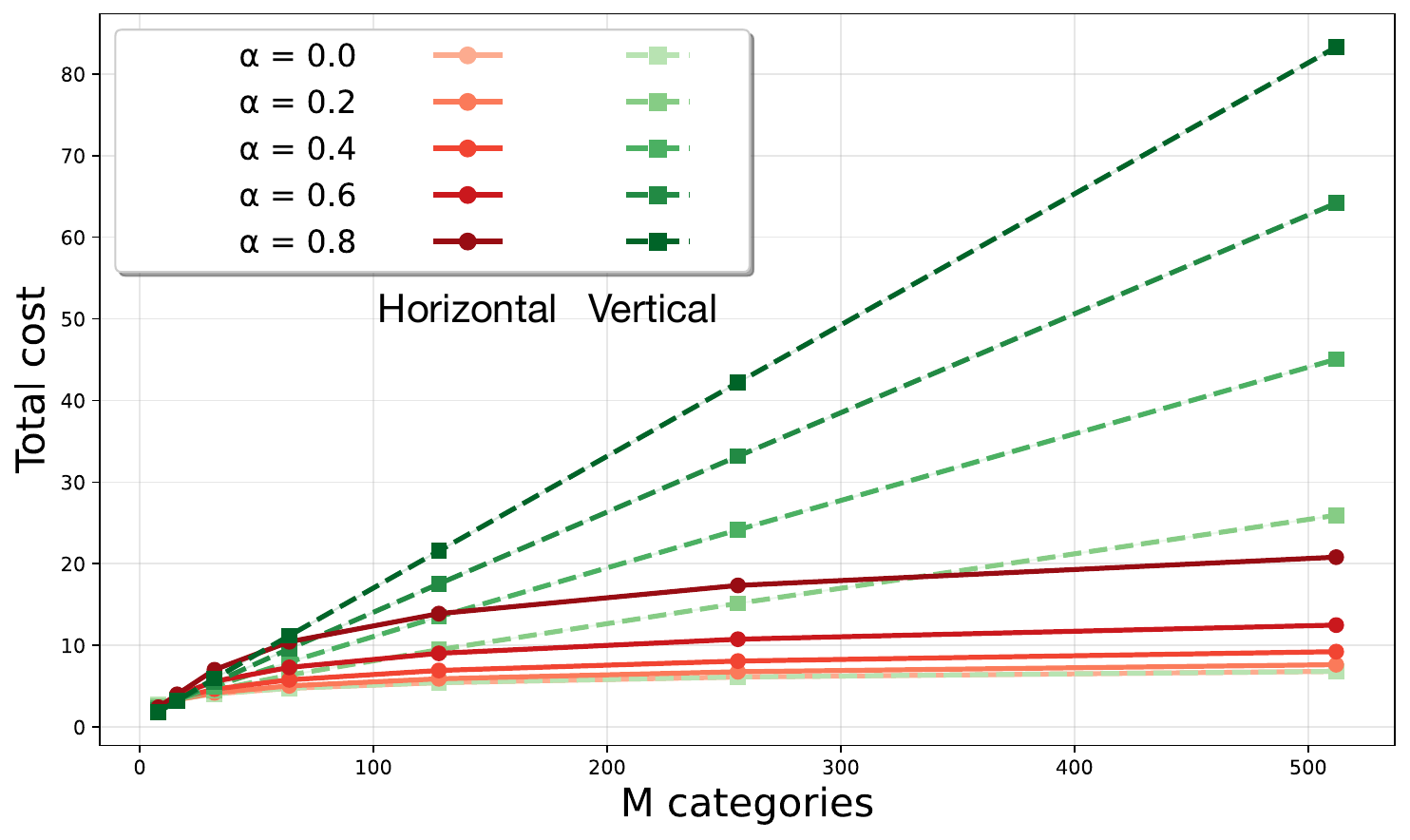}
  \caption{A comparison of the total cost achieved by a horizontal collective vs a vertical collective with a budget of $\epsilon = 0.01$ under the $\mathbf{p}^{\rm SWS}$ pricing strategy and uniform valuations distributions.}
  \label{fig:horizontalverticalcollact}
\end{figure}

Figure~\ref{fig:totalcompletiontime_all_quantile} showcases the total cost (top row) and the total wait time (bottom row) for various values of $\alpha$ for the $\mathbf{p}_{\alpha}^{\rm SWS}$ pricing strategy. The collective size does not affect the asymptotic growth rate of cost or wait time, but it does change constants (computed in closed form in the Appendix): larger collectives increase both cost and wait time by constant factors.~\looseness=-1

\begin{remark}[Information campaigns]
    The horizontal collective action model is inspired by decentralized organizing mechanisms~\citep{hardt2023algorithmic,umney2024platform}. Our results show that such uncoordinated efforts cannot overcome structural wage suppression: 
    recruiting anything less than all low-valuation workers preserves our bounds qualitatively, since horizontal coordination primarily \emph{thins} the arrival rate of low-cost acceptances. For this reason, while information campaigns have been deemed helpful for worker organizing~\citep{irani2013turkopticon,woodcock2021fight}, their efficacy may be limited if they fail to reach essentially all lowest-paid workers. In our model, information about a category's market price may reduce variance yet still leave tail properties that sustain sublinear or constant wage suppression regimes.~\looseness=-1
\end{remark}

\begin{figure*}
  \centering
  \subfloat[Beta$(0.5,1)$]
  {\includegraphics[width=.25\linewidth]{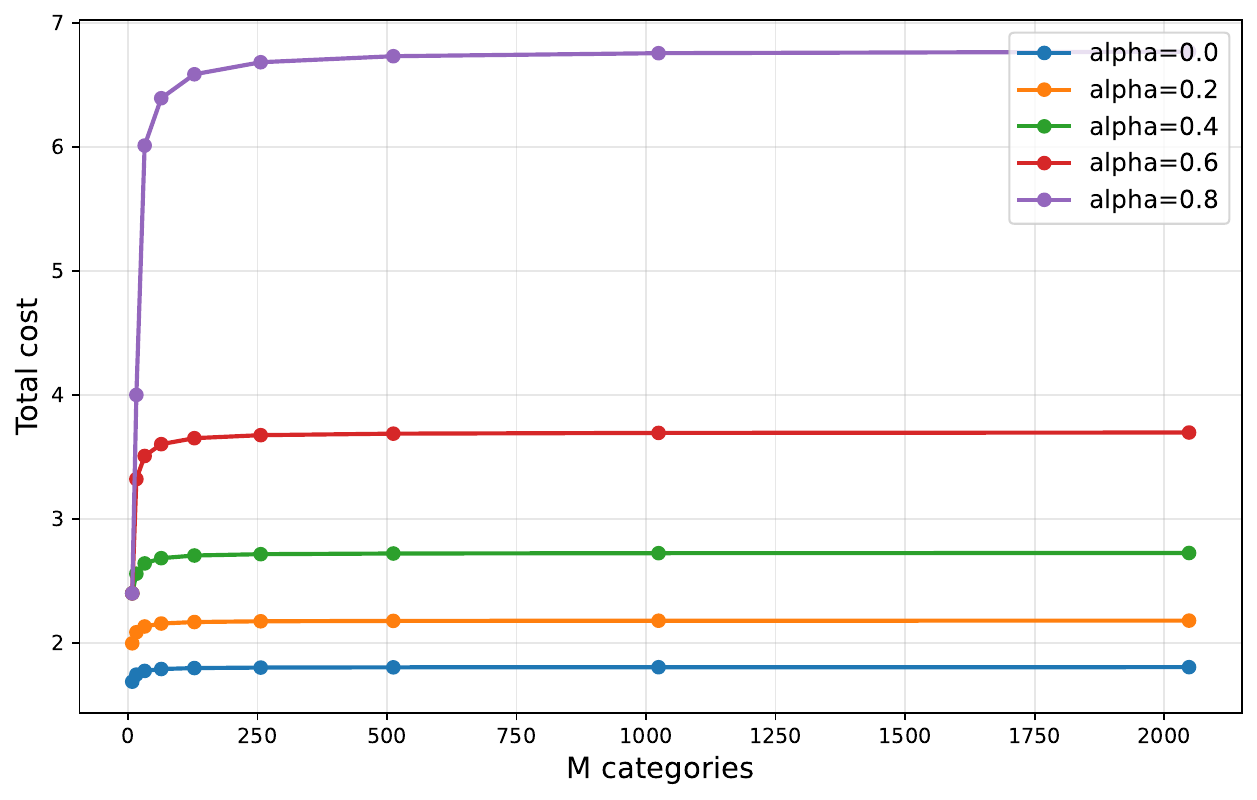}}
  \subfloat[Uniform]
  {\includegraphics[width=.25\linewidth]{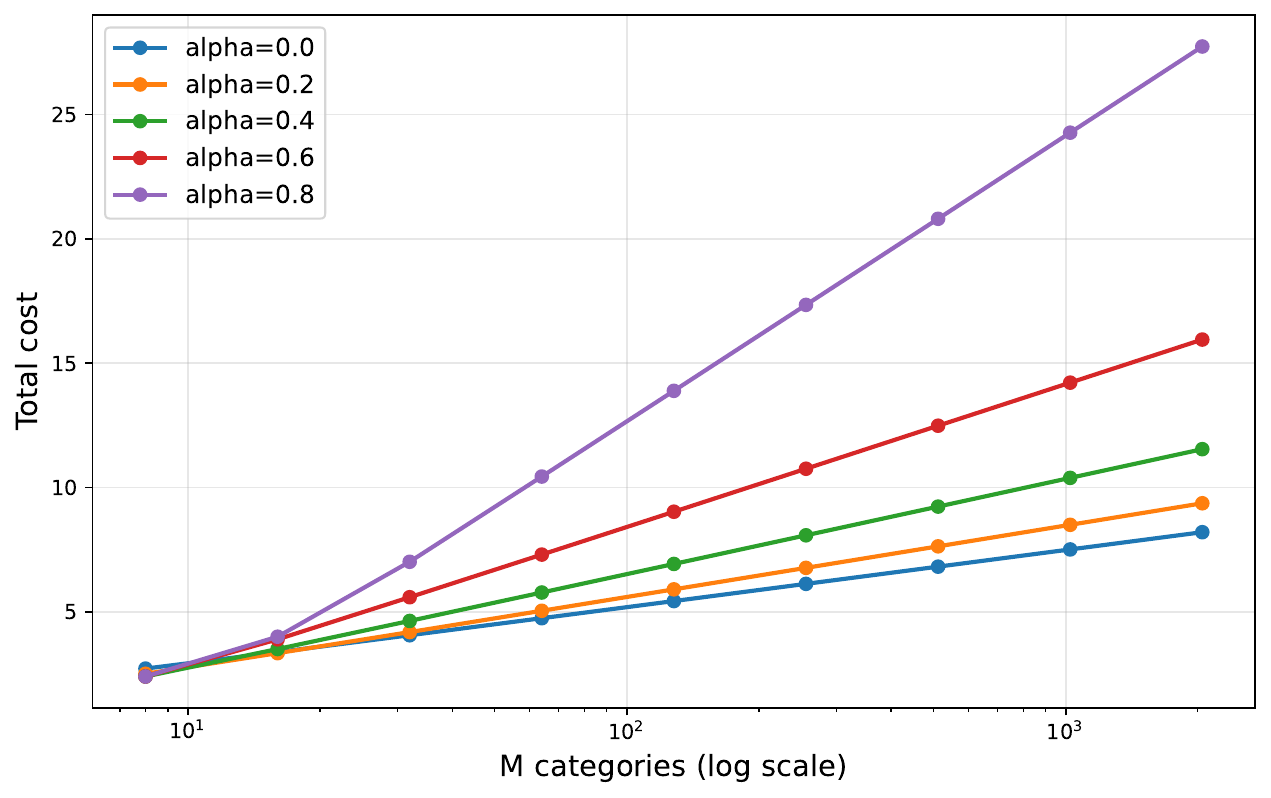}}
  \subfloat[Exponential, $\lambda = 3$]
  {\includegraphics[width=.25\linewidth]{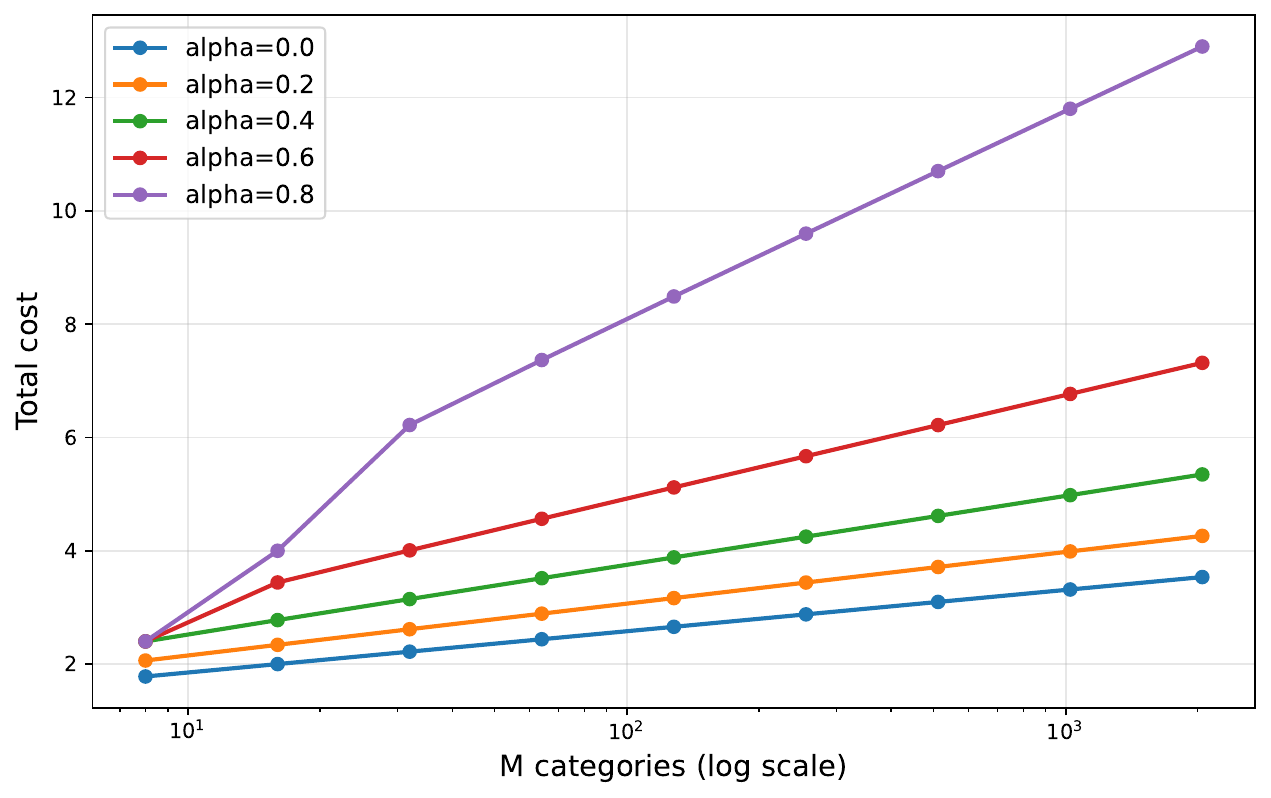}}
  \subfloat[Beta$(2,1)$]
  {\includegraphics[width=.25\linewidth]{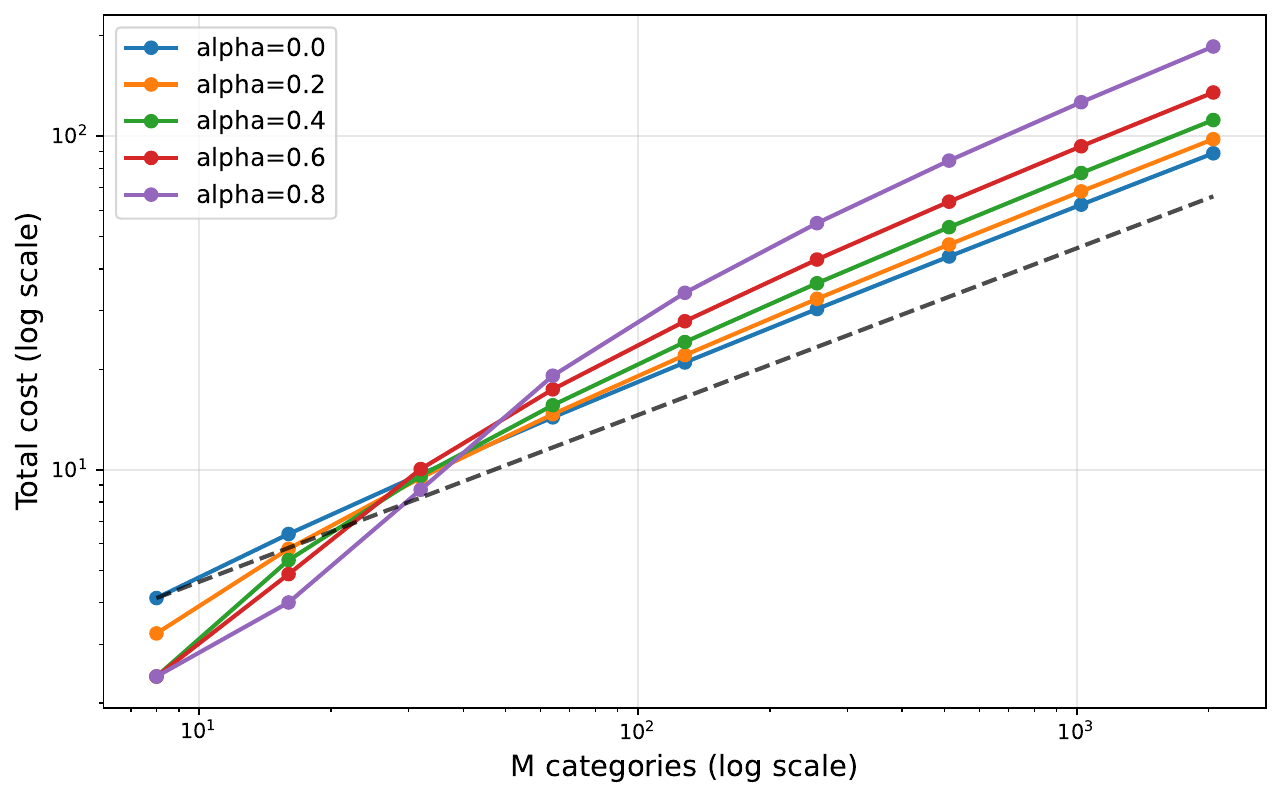}} \\
  \subfloat[Beta$(0.5,1)$]
  {\includegraphics[width=.25\linewidth]{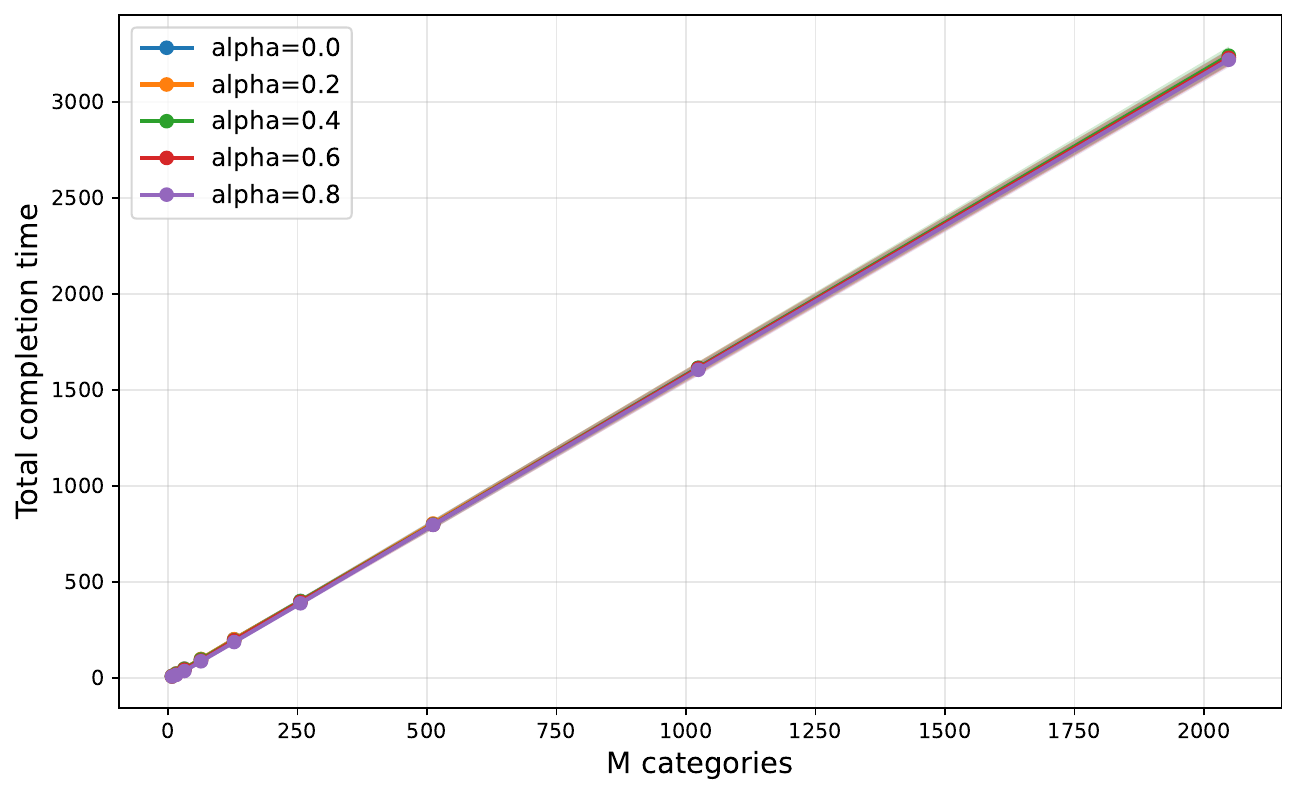}}
  \subfloat[Uniform]
  {\includegraphics[width=.25\linewidth]{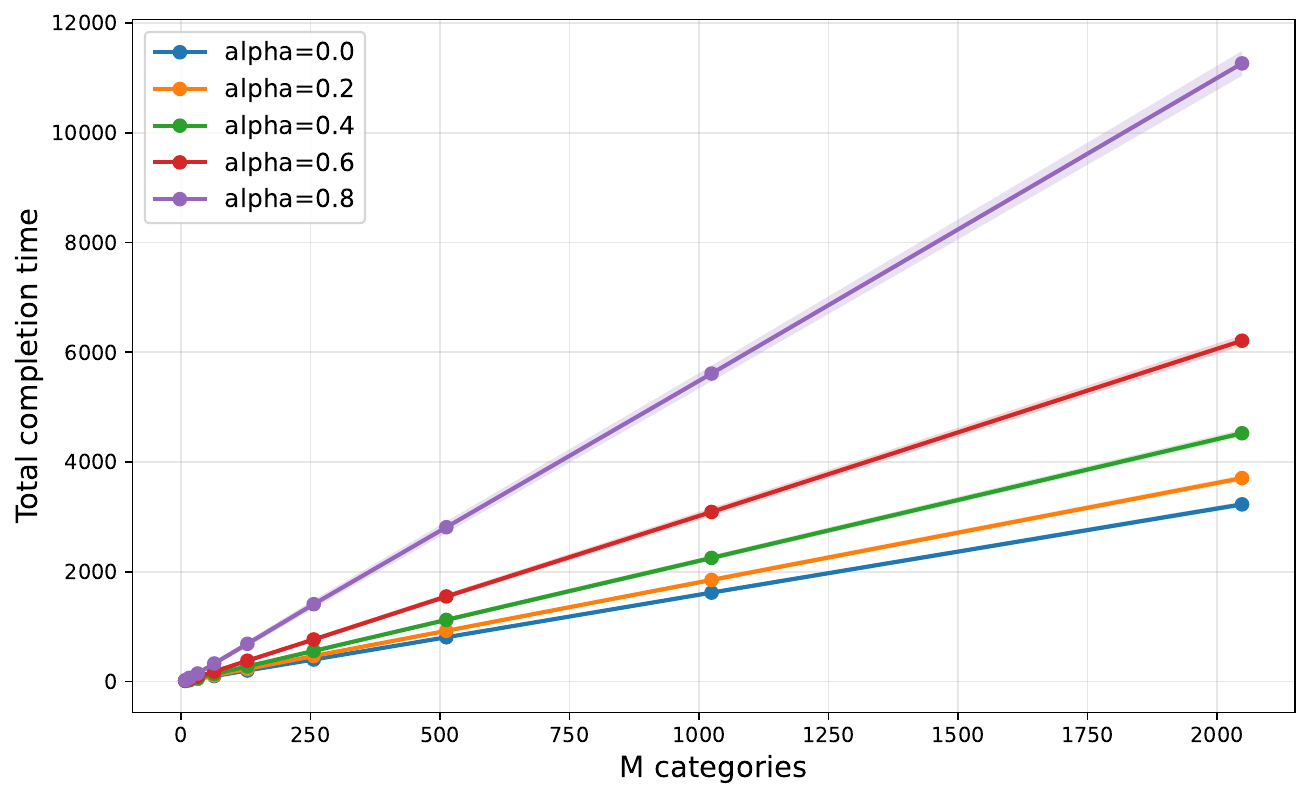}}
  \subfloat[Exponential, $\lambda = 3$]
  {\includegraphics[width=.25\linewidth]{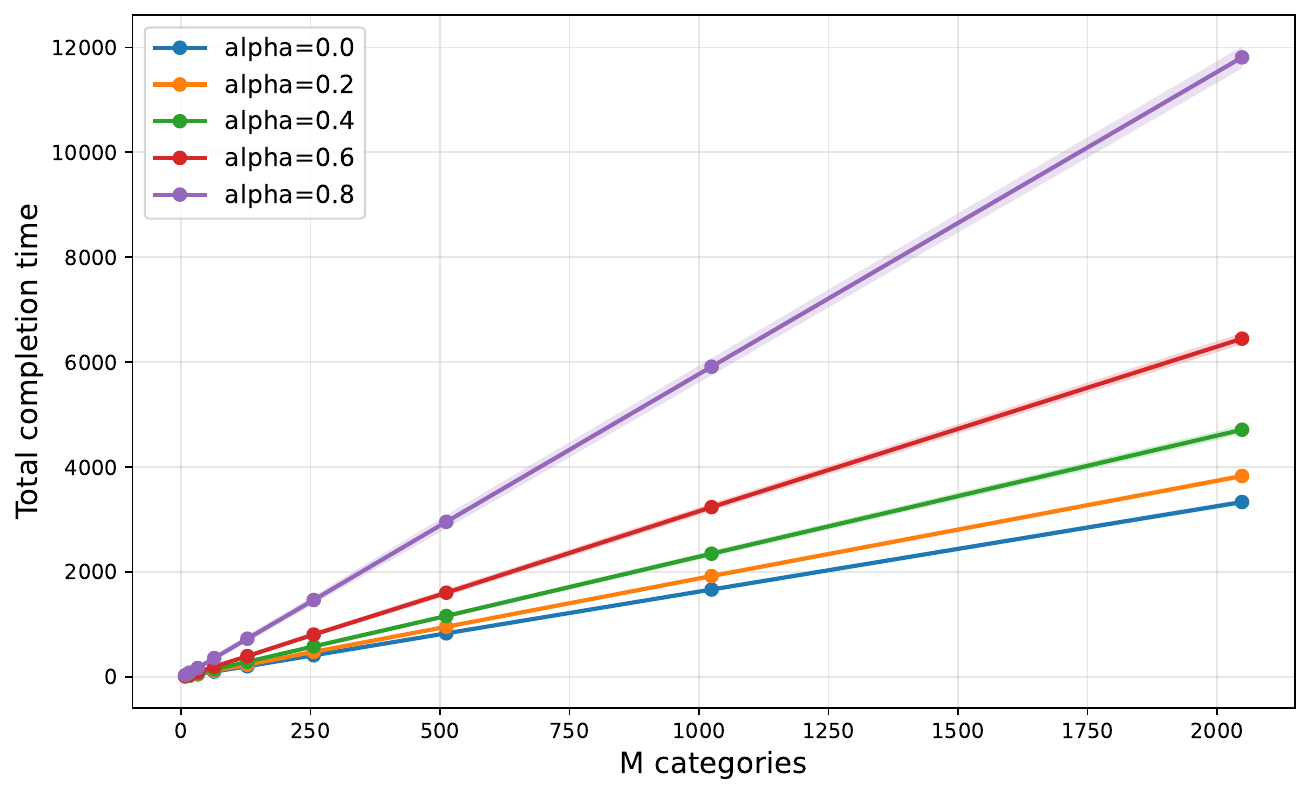}} 
  \subfloat[Beta$(2,1)$]
  {\includegraphics[width=.25\linewidth]{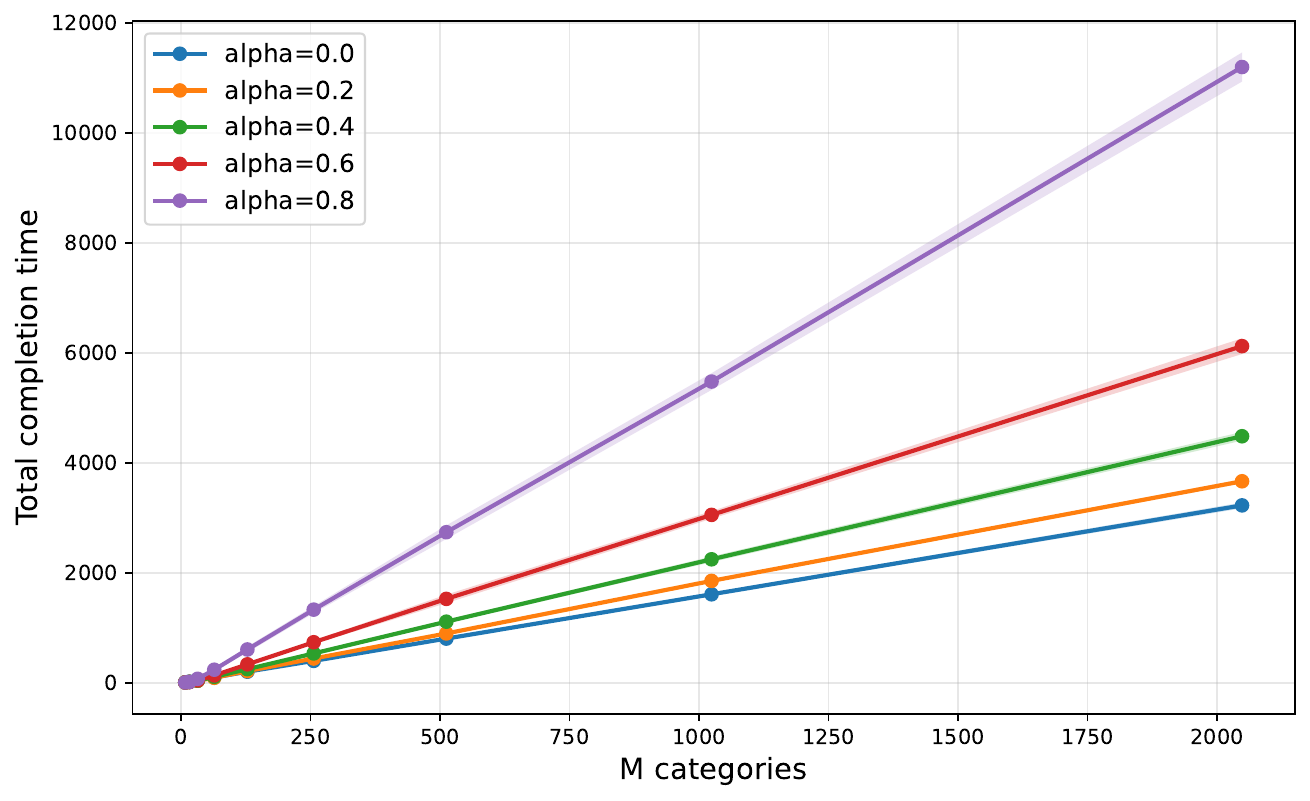}} 
  \caption{Total cost (top row) and total wait time (bottom row) under $\mathbf{p}^{\rm SWS}$ for various values of $\alpha$ and $M$, for a collective price floor value of $0.2$ and different workers' valuations distributions. Panel (a) highlights the $O(1)$ total cost regime, panels (b)--(c) highlight the $O(\log(M))$ total cost regime in log scale, and panel (d) highlights a $O(M^{1/2})$ regime of total cost in log-log scale (dashed line represents the $0.5$ slope line). The total wait time is $O(M)$ with constants depending on $\alpha$ and distribution families.}
  \label{fig:totalcompletiontime_all_quantile}
\end{figure*}

%% file: sec5-targetedcoordination.tex
\section{Vertical collective action as (re)distribution}
\label{sec:targetedcollact}

In contrast to horizontal collectives, \textbf{targeted} collectives strategically allocate limited organizing resources (e.g., pointed information campaigns or price-floor agreements) to the lowest quantiles, where wage suppression originates. In our model, this means focusing on $\alpha$-fraction segments of the market and concentrating a small ``budget'' $\epsilon$ of organizing effort in each by targeting the lowest-cost workers. 

We refer to such targeted strategies as \emph{vertical coordination}, where ``vertical'' means organizing within entire market segments (here, segments correspond to task categories). We draw on classical organizing strategies in collective action: vertical collectives can be more effective in dispersed markets~\citep{olson2012logic}, can address fairness concerns in economic models~\citep{kopel2019endogenous}, and have been applied as industry-region organizing in countries like Germany~\citep{jager2022german}. Both recent~\citep{diani2025embeddedness} and longitudinal~\citep{martin2004conceptualizing,milkman2001organizing} studies show that the shift from trade-based organizational unionizing to more heterogeneous movements across sectors can reduce bargaining power.

Clearly, vertical coordination places workers in the \emph{coordinated markets} regime: for an $\alpha$-fraction of categories, the intervention instills a price floor, and thus the principal must pay a total cost at least $\Theta(\alpha M)$ (as property~\ref{eq:left-gap} is satisfied for an $\alpha$-fraction of categories). We consider the fraction $\alpha$ a constant and show that even for small $\alpha$, Lemma~\ref{lemma:gap-linear} applies and we recover the tight linear cost regime. We next quantify the minimum budget required to deploy such targeted interventions in terms of total variation distance on valuation distributions.~\looseness=-1

\begin{theorem}
    Define the \emph{critical budget} for vertical collective action as 
    \begin{equation}
        \epsilon^* := \inf \{ \epsilon >0: \inf_i Q_i(\epsilon) > 0\}. 
    \end{equation}
    For any $\epsilon > \epsilon^*$, we can construct a new set of valuations distributions $\{D_i^*\}_{i \in [M]}$ such that $\mathrm{TV}(D_i^*,D_i)\le \epsilon$ for all $i \in [M]$, and under which the stochastic wage suppression pricing strategy must incur total cost \[\mathrm{COST}(\mathbf{p}^{\rm SWS}) = \Theta(M).\]
    \label{thm:targetedepsilon}
\end{theorem}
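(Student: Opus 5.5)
The plan is to build $D_i^*$ by \emph{moving the bottom $\epsilon$-mass of each $D_i$ up to a price floor}, and then conclude with Lemma~\ref{lemma:gap-linear}. Fix $\epsilon>\epsilon^*$. By the definition of $\epsilon^*$ as an infimum, there is some $\epsilon'\in(\epsilon^*,\epsilon]$ with $\inf_i Q_i(\epsilon')>0$. Because quantiles are monotone in $u$, this gives $r:=\inf_i Q_i(\epsilon)\ge \inf_i Q_i(\epsilon')>0$.

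The construction is as follows. Write $m_i := D_i(r^-)=\mathbb{P}_{D_i}(v<r)$. Since $Q_i(\epsilon)\ge r$, we have $D_i(x)<\epsilon$ for all $x<r$, and therefore $m_i\le\epsilon$. Let $D_i^*$ be the law of $\max(v,r)$ with $v\sim D_i$. In words, the collective recruits every worker whose valuation lies below $r$ and has them commit to the floor $r$, exactly as in the vertical-collective story. This map changes only the mass $m_i$, so $\mathrm{TV}(D_i^*,D_i)\le m_i\le\epsilon$ for every $i$. The new distribution satisfies $D_i^*(x)=0$ for $x<r$, so its left endpoint is $g_i\ge r$ and hence $g_*\ge r>0$, uniformly in $i$ and $M$.

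To finish, apply Lemma~\ref{lemma:gap-linear} to $\{D_i^*\}$, which gives $\mathrm{COST}(\mathbf{p}^{\rm SWS})=\Theta(M)$. It is worth checking both directions directly, so as not to rely on the lemma's hidden constants.
\begin{itemize}
\item \emph{Lower bound.} Every posted price satisfies $p_t=\max_{i\in A_t}Q_i^*(\theta/|A_t|)\ge r$, because $Q_i^*(u)\ge g_i\ge r$ for every $u>0$. Summing over the $M$ rounds gives total cost at least $rM$.
\item \emph{Upper bound.} All valuations lie in $[0,1]$, so each $p_t\le 1$ and the total cost is at most $M$.
\end{itemize}
Theorem~\ref{thm:quantimeconcentration} applies verbatim to $\{D_i^*\}$, so the linear wait-time guarantee is kept.

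The main obstacle is the uniformity in $i$ and $M$. The floor $r$ must not degrade as $M$ grows, because the family $\{D_i\}$ may change with $M$; this is exactly what the $\inf_i$ in the definition of $\epsilon^*$ is there to secure. The strict inequality $\epsilon>\epsilon^*$ is needed to make $r$ strictly positive. One subtlety I would address explicitly is atoms at $r$ itself. Mass sitting exactly at $r$ is not moved by the map, so the bound $m_i\le\epsilon$ still holds, and the left-quantile convention $Q_i(u)=\inf\{x:D_i(x)\ge u\}$ is what yields $D_i(x)<\epsilon$ strictly below $r$.
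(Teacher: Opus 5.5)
Your proof is correct, but it takes a different route from the paper in both the construction and the key step.

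\textbf{The paper's route.} The paper removes the lowest $\epsilon$-mass of each $D_i$ and relocates it to the top price $1$, setting $D_i^*(x)=\max\{D_i(x)-\epsilon,0\}$ for $x<1$. It then proves the quantile-shift inequality $Q_i^*(u)\ge Q_i(u+\epsilon)$ (Proposition~\ref{prop:quantileineq}). From this it reads off $p_t\ge \max_{i\in A_t}Q_i(\epsilon+\theta/|A_t|)\ge \min_i Q_i(\epsilon)$. The total-variation bound is checked by a separate measure-decomposition argument.

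\textbf{Your route.} You push only the mass lying strictly below the common floor $r=\inf_i Q_i(\epsilon)$ up to $r$. The total-variation bound then follows from the trivial coupling of $v$ with $\max(v,r)$. The cost bound follows directly from Lemma~\ref{lemma:gap-linear}, with no quantile inequality needed.

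\textbf{What your version buys.} It is simpler, and it is literally a price-floor commitment, which matches the coordinated-market story. It is also a weaker intervention than the paper's. For $r\le x<1$ the paper's CDF satisfies $\max\{D_i(x)-\epsilon,0\}\le D_i(x)$, and the two CDFs agree for $x<r$, so your $D_i^*$ is stochastically dominated by the paper's. Your argument therefore shows that this milder change already forces linear cost. You also handle the strict inequality $\epsilon>\epsilon^*$ cleanly via the infimum and monotonicity, whereas the paper's sketch is phrased for $\epsilon\ge\epsilon^*$.

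\textbf{What the paper's version buys.} It moves exactly $\epsilon$ mass and yields a finer per-iteration comparison with the original quantiles. The cost is a more drastic change in worker behavior, since valuations jump to the maximal price, plus the extra quantile lemma.

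Both arguments arrive at the same lower-bound constant $\min_i Q_i(\epsilon)$. Both also rest on the same implicit requirement that this constant be uniform in $M$, which you rightly flag.
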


\begin{proof}[Proof sketch]

For any $\epsilon \geq \epsilon^*$, define the valuation distributions after vertical collective action for category $i$ by removing $\epsilon$ mass from the lowest-cost region
and placing it at $1$:
\begin{equation}
    D_i^*(x)\;:=\;\max\{D_i(x)-\epsilon,\,0\}\ \ \text{for }x<1,\qquad D_i^*(1)=1.
\end{equation}

This transformation moves at most $\epsilon$ probability mass, hence $\mathrm{TV}(D_i^*,D_i)\le \epsilon$. We then claim that for every $u\in (0,1-\epsilon)$, the quantiles of the adjusted distribution satisfy $Q_i^*(u)\;\ge\; Q_i(u+\epsilon)$ (proved in Proposition~\ref{prop:quantileineq} in the Appendix). 
Next, we note that the $\mathbf{p}^{\rm SWS}$ strategy satisfies a constant probability of success of at least $1 - e^{-\theta}$ for some $\theta> 0$ in each iteration $t$ (more precisely, for $\theta > 1 - \epsilon)$, since it is defined as 
\begin{equation}
    p_t \;\ge\; \max_{i\in A_t} Q_i^*\!\Big(\frac{\theta}{M-t+1}\Big)
\;\ge\;
\max_{i\in A_t} Q_i\!\Big(\epsilon+\frac{\theta}{M-t+1}\Big)
\;\ge\;
\max_{i\in A_t} Q_i(\epsilon).
\end{equation}

Hence the total payment obeys
\begin{equation}
\begin{gathered}
    \mathrm{COST}(\mathbf{p}^{\rm SWS}) \;=\; \sum_{t=1}^M p_t
\;\ge\; \sum_{t=1}^M \max_{i\in A_t} Q_i(\epsilon)
\;\ge\; M \cdot \underline{q}_\epsilon, \\
\quad\text{where}\quad
\underline{q}_\epsilon \;:=\; \min_{i\in[M]} Q_i(\epsilon).
\end{gathered}
\end{equation}

In particular, as long as $\underline{q}_\epsilon>0$ (a constant independent of $M$), the principal's cost is $\Theta(M)$, with a wait time of $O(M)$ with constants depending on $\epsilon$. Without this condition, the principal can still exploit tail behavior and achieve sublinear total cost: we need that the lowest $\epsilon$-mass of the distribution is not ``free'', so the workers in this region do not take arbitrarily low payments. In that case, we need a critical budget high enough to cover the entire mass.
\end{proof}

\begin{remark}
    For clarity, in Theorem~\ref{thm:targetedepsilon} the intervention is applied to all categories; applying it to an $\alpha$-fraction forces cost $\Theta(\alpha M)$. We connect Theorem~\ref{thm:targetedepsilon} with different market regimes in terms of the critical budget necessary to prevent wage suppression:~\looseness=-1

    \begin{itemize}
        \item \textbf{No campaign is needed in coordinated markets:} in markets where there is already a positive price floor (property~\eqref{eq:left-gap}), collective action is not necessary, since the total cost is $\Theta(M)$.

    \item \textbf{When any vertical campaign prevents wage suppression:} if targeted categories have no atom at $0$ (and whose support's left endpoint becomes positive after any $\epsilon$-shift), then $\epsilon^*=0$ for those categories. This can restore any sublinear wage suppression regime to a linear one, including logarithmic and constant. This is significant as it includes a wide range of distributions, as described in Section~\ref{sec:nocoordination}, and hence various market conditions.
    
    \item \textbf{When a vertical campaign needs a minimum budget:} if more than $(1 - \alpha)$-fraction of the valuations distributions have an atom at $0$, i.e. $\epsilon^* = \rho$ for some $\rho > 0$, that means that a vertical collective must `move' the entire atom in order to shift the total price from $O(1)$ to $O(M)$. This implies a choice on the side of the decision-maker: choose a target fraction $\alpha$ that allows targeting distributions without this property, or choose a high enough budget $\epsilon$ to cover the mass at $0$. Our model allows a decision-maker to make this choice by providing a guarantee of \emph{when} targeting works and what budget it requires.
    \end{itemize}    
\end{remark}

This shows the \textbf{power of targeted recruiting} in collective action: horizontal collectives preserve wage suppression, whereas vertical collectives restore the cost bound to $\Theta(M)$ for any small budget as long as an $\alpha$-fraction of categories do not have an atom at $0$. Instead of thinning arrivals, they lift the free-labor region. When such regions have significant mass, collectives are still effective as long as the targeting budget exceeds the atom size. This difference captures the classic tension between breadth and focus in collective action: horizontal collectives diffuse effort across many categories, while vertical collectives spend their limited $\epsilon$ budget where it matters most---on the lowest quantiles that cause wage suppression.~\looseness=-1

Figure~\ref{fig:horizontalverticalcollact} illustrates the asymptotic difference between the power of a collective in horizontal versus vertical coordination: for any fixed $\alpha < 1$, a principal can obtain the $O(M)$ wait time guarantee with the stochastic wage suppression strategy in horizontal coordination (in red), essentially keeping the total cost of order $O(\log(M))$. However, vertical coordination (in green) quickly forces a wage increase as $M$ increases, even for small collective sizes and a small budget $\epsilon$ for targeting, restoring a total cost of $\Theta(M)$.~\looseness=-1

%% file: sec6-conclusion.tex
\section{Discussion}
\label{sec:conclusion}

Our work has implications for both platform designers and worker advocates. On the one hand, it provides a versatile theoretical model for diagnosing when algorithmic pricing can lead to wage suppression, based on the structure of workers’ perceived cost distributions. On the other hand, it offers concrete guidance for designing collective interventions that counteract procurement power, highlighting when and why targeted organizing can be effective.

Several open directions emerge. First, our pricing strategy could be extended to dynamic pricing models that learn workers' valuation distributions online from task uptake. While the stochastic wage suppression strategy we analyze is not intended to be cost-optimal, it provides a natural benchmark under partial distributional knowledge and reveals sharp phase transitions in workers' wages. Second, future work could evaluate platform-side interventions that reduce workers' uncertainty about their costs and study how these interact with different models of collective action, including alternative targeting and redistribution mechanisms. In this work, we considered $\alpha$ as an \emph{a priori} chosen collective fraction. The problem of dynamically adjusting $\alpha$ with $M$ is also interesting as a future direction. Third, an important empirical direction is to characterize valuation distributions in real-world platforms and to examine how they evolve under sustained pricing strategies and organizing efforts.

Overall, our results underscore a structural asymmetry in digital labor markets: while random, untargeted coordination tends to thin participation without altering wage suppression regimes, even modestly targeted collective action can fundamentally change market outcomes.

\subsection*{Acknowledgements}

Celestine Mendler-Dünner acknowledges financial support of the Hector Foundation.

%% file: sec-appendix.tex
\section{Appendix}
\label{sec:appendix}

We detail proofs of the main results presented in the main text. 

\subsection{Wait time of SWS}

We start with basic properties of the wait time and success probability at iteration $t$: since workers are drawn sequentially until the first success occurs, $N_t \sim \mathrm{Geom}(q_t)$, with mean and variance
\begin{equation}
    \mathbb{E}[N_t] = \frac{1}{q_t}, 
\qquad \mathrm{Var}(N_t) = \frac{1-q_t}{q_t^2}.
\end{equation}

As defined before, the total number of workers needed across all iterations is equal to the total wait time ${\rm WAIT}(\mathbf{p}^{\rm SWS}) := \sum_{t=1}^M N_t$. Clearly, $N_t \perp N_{t + 1}$, so
\begin{equation}
    \mathbb{E}[{\rm WAIT}(\mathbf{p}^{\rm SWS})] = \sum_{t=1}^M \frac{1}{q_t},
\qquad
\mathrm{Var}({\rm WAIT}(\mathbf{p}^{\rm SWS})) = \sum_{t=1}^M \frac{1-q_t}{q_t^2}.
\end{equation}

As part of our proofs, we also need a standard result on sub-exponential behavior of Geometric variables:  

\begin{lemma}[Sub-exponential property of Geometric random variables]
\label{lem:geom-subexp-stage}
If $\Delta\sim\mathrm{Geom}(q)$ on $\{1,2,\dots\}$, then
$\mathbb{E}\Delta = 1/q$, $\mathrm{Var}(\Delta)=(1-q)/q^2$.
Moreover, there exists a universal $C>0$ such that
\begin{equation}
\|\Delta-\mathbb{E}\Delta\|_{\psi_1} \;\le\; \frac{C}{q},
\end{equation}
i.e., $\Delta-\mathbb{E}\Delta$ is sub-exponential with scale $O(1/q)$.
\end{lemma}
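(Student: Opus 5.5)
The moments are immediate from the law of $\Delta$. Since $\mathbb{P}(\Delta=k)=(1-q)^{k-1}q$ for $k\ge1$, differentiating the geometric series $\sum_k s^k = s/(1-s)$ at $s=1-q$ gives $\mathbb{E}\Delta=1/q$ and $\mathbb{E}[\Delta(\Delta-1)]=2(1-q)/q^2$. From these, $\mathrm{Var}(\Delta)=(1-q)/q^2$. For the $\psi_1$ bound I would control $\Delta$ itself and then centre, using the standard fact that $\|X-\mathbb{E}X\|_{\psi_1}\le C_0\|X\|_{\psi_1}$ for a universal $C_0$. Alternatively, one can write $\|X-\mathbb{E}X\|_{\psi_1}\le \|X\|_{\psi_1}+|\mathbb{E}X|/\log 2$ and note that $\mathbb{E}\Delta=1/q$ is already of the target order. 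So it suffices to show $\|\Delta\|_{\psi_1}\le C'/q$.

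To bound $\|\Delta\|_{\psi_1}$, the cleanest route is the tail characterization of the $\psi_1$ norm. We have $\mathbb{P}(\Delta>k)=(1-q)^k\le e^{-qk}$ for integer $k$. For real $x\ge0$ this gives $\mathbb{P}(\Delta> x)\le \mathbb{P}(\Delta>\lfloor x\rfloor)\le e^{-q(x-1)}\le e\cdot e^{-qx}$, using $q\le 1$. A tail bound $\mathbb{P}(|X|>x)\le 2e^{-x/K}$ with $K=1/q$ (the constant $e$ in place of $2$ only changes universal constants) implies $\|X\|_{\psi_1}\le cK$ by the standard equivalence of sub-exponential characterizations. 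As a direct check, I would compute $\mathbb{E}\exp(\lambda\Delta)=qe^{\lambda}/(1-(1-q)e^{\lambda})$ for $e^\lambda<1/(1-q)$. Taking $\lambda=q/2$, we have $(1-q)e^{q/2}\le e^{-q/2}$, so the denominator is at least $1-e^{-q/2}\ge q/4$ for $q\in(0,1]$. The whole expression is then at most $4e^{1/2}$, a universal constant. Rescaling by Jensen (raising to a power $\le 1$) brings this to $\le 2$ at $\lambda=q/(2c)$ for a suitable universal $c$, which gives $\|\Delta\|_{\psi_1}\le 2c/q$.

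The only mildly delicate step is tracking that every constant is universal and independent of $q$, particularly near $q\to0$ and at $q=1$. At $q=1$ the variable $\Delta\equiv1$ and the bound is trivial. Near $q\to0$ the inequality $1-e^{-q/2}\ge q/4$ must hold uniformly on $(0,1]$. It does: $1-e^{-y}\ge y/2$ for $y\in[0,1]$, which follows from concavity of $1-e^{-y}$ and the endpoint value $1-e^{-1}>1/2$. With these checks, combining the centring step with the mgf bound yields $\|\Delta-\mathbb{E}\Delta\|_{\psi_1}\le C/q$.
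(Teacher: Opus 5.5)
Your proof is correct, but it takes a different route from the paper.

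\textbf{The paper's route.} It works with the centred cumulant generating function. It writes down $M_\Delta(\lambda)=qe^\lambda/(1-(1-q)e^\lambda)$ and asserts that a Taylor expansion gives $\log\mathbb{E}e^{\lambda(\Delta-\mathbb{E}\Delta)}\le C_1\lambda^2/q^2$ for $|\lambda|\le cq$. This is the local mgf (Bernstein-type) characterization of sub-exponentiality, and the paper then appeals to the equivalence of characterizations to conclude.

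\textbf{Your route.} You never expand to second order. You bound the uncentred variable: since $\Delta\ge 1$, only $\lambda>0$ matters. You do this with a single evaluation of the mgf at $\lambda=q/2$, or with the explicit tail $\mathbb{P}(\Delta>x)\le e\cdot e^{-qx}$. You then rescale by Jensen to reach the Orlicz threshold $2$, and centre via the triangle inequality using $\|a\|_{\psi_1}=|a|/\ln 2$.

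\textbf{What your route buys.} It proves the literal $\psi_1$ statement with explicit universal constants. The paper's sketch leaves implicit that the Taylor remainder is uniform in $q$. That requires the tilted success parameter $1-(1-q)e^{\lambda}$ to stay of order $q$ on $|\lambda|\le cq$. Your one-point bound $1-e^{-q/2}\ge q/4$ on $(0,1]$ handles the analogous issue directly and transparently.

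\textbf{What the paper's route buys.} It lands immediately on the local mgf bound for the centred variable. That is the form in which Bernstein's inequality is invoked in the proof of Theorem~\ref{thm:quantimeconcentration}. From your $\psi_1$ bound you would reach that form through the same standard equivalence, or by using the version of Bernstein's inequality stated directly in terms of $\psi_1$ norms.
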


\begin{proof}
The mgf of $\Delta$ is $M_\Delta(\lambda)=\frac{q e^\lambda}{1-(1-q)e^\lambda}$ for $\lambda<-\ln(1-q)$. For $|\lambda|\le c\,q$ with a small enough absolute $c>0$, a Taylor expansion of $\log M_\Delta(\lambda)-\lambda\,\mathbb{E}\Delta$ yields
$\log \mathbb{E} e^{\lambda(\Delta-\mathbb{E}\Delta)}\le C_1\lambda^2/q^2$. Since this holds for $|\lambda| \leq cq$, we also get $\log \mathbb{E} e^{\lambda(\Delta-\mathbb{E}\Delta)}\le C_1 c \lambda/q$, which is the standard criterion for sub-exponentiality with parameter $O(1/q)$.
\end{proof}

\begin{proof}[Proof of Theorem~\ref{thm:quantimeconcentration}]

By Lemma~\ref{lemma:quantileprices}, we have $q_t \ge q_*:=1-e^{-\theta}$ for all $t\in[M]$.
Since $N_t\sim \mathrm{Geom}(q_t)$ and the $N_t$ are independent across $t$,
\[
\mathbb{E}[{\rm WAIT}(\mathbf{p}^{\rm SWS})]
=\sum_{t=1}^M \mathbb{E}[N_t]
=\sum_{t=1}^M \frac{1}{q_t}
\le \frac{M}{q_*}.
\]
We also note that there exists a constant $0<q_-<1$ (depending only on $\theta$) such that
\begin{equation}\label{eq:q-bounds}
q_- \;\le\; q_t \qquad \text{for all } t\in\{1,\dots,M\}.
\end{equation}
as per Lemma~\ref{lemma:quantileprices}, with $q^*$ as defined above. As such, we can also compute the variance of the wait time under the pricing strategy $\mathbf{p}^{\mathrm{SWS}}$ as
\begin{equation}
\begin{gathered}
    \mathrm{Var}({\rm WAIT}(\mathbf{p}^{\rm SWS}))
\;\le\; \sum_{t=1}^{M}\frac{1-q_-}{q_-^2} \;=\; O(M).
\end{gathered}
\end{equation}

For the concentration bound, apply Lemma~\ref{lem:geom-subexp-stage} with $q=q_t\ge q_{*}$ to find $\psi_1$ control $\|N_t-\mathbb{E} N_t\|_{\psi_1}\le C/q_{*}$ for some constant $C$. Since all $N_t$ are independent, we can apply Bernstein’s inequality for the sum of independent sub-exponential variables $X_t = N_t - \mathbb{E}[N_t]$, which sum up to ${\rm WAIT}(\mathbf{p}^{\rm SWS})$. 

We thus find that exists a constant $c=c(\theta)>0$ s.t. $\forall x>0$,
\begin{equation}
\mathbb{P} \!\big({\rm WAIT}(\mathbf{p}^{\rm SWS})-\mathbb{E}[{\rm WAIT}(\mathbf{p}^{\rm SWS})]\ge x\big)
\;\le\;
\exp\!\Big(-\,c\,\min\{\,x^2/M,\; x\,\}\Big),
\end{equation}
or, equivalently, $\exists c^* > 0$ that depends only on $c$ s.t. $\forall \epsilon\in(0,1)$,
\begin{equation}\label{eq:bernstein-relative-stage}
\Pr\!\left(\left|\frac{{\rm WAIT}(\mathbf{p}^{\rm SWS})}{\mathbb{E}[{\rm WAIT}(\mathbf{p}^{\rm SWS})]}-1\right|\ge \epsilon\right)
\;\le\; 2\exp\!\big(-c^*\,\epsilon^2 M\big).
\end{equation}

\end{proof}

We note that the wait time concentration bound is distribution-agnostic and holds for any ${D_i}$ under the independence assumptions of the model.~\looseness=-1

\begin{table}[h]
  \caption{Left-tail properties of $D$ and implications for the total cost achieved by $\mathbf{p}^{\rm SWS}$. }
$$
\begin{array}{c|c}
\text{Left-tail class} & \text{Total cost under }\mathbf{p}^{\rm SWS} \\
\hline
\text{Gap } (g>0) & \Theta(M) \\
D(x)\asymp x & \Theta(\log M) \\
D(x)\asymp x^{\alpha},\ \alpha>1 &  \Theta(M^{\,1-1/\alpha}) \\
D(x)\gtrsim x^{\beta},\ \beta<1 & O(1) \\
\text{Atom at }0 & O(1)
\end{array}
$$
\label{table:costregimes}
\end{table}

\subsection{Total cost of SWS}

See Table~\ref{table:costregimes} for different left-tail properties of $D$.\footnote{In Table~\ref{table:costregimes}, the $\asymp$ notation means that if $f(x) \asymp g(x),$ then $\exists c_1, c_2 > 0$ s.t. $ c_1 g(x) \leq f(x) \leq c_2 g(x)$ for small $x$; the $\gtrsim$ notation means that if $f(x) \gtrsim g(x),$ then $\exists c>0$ s.t. $ f(x) \geq c g(x)$ for small $x$.}.

\begin{proof}[Proof of Lemma~\ref{lemma:gap-linear}]
We show that the stochastic wage suppression strategy satisfies
\begin{equation}
M\,g_{*} \;\le\; \mathrm{COST}(\mathbf{p}^{\rm SWS}) \;\le\; M\cdot \max_{i\in[M]} Q_i(\theta)
\;\le\; M,
\end{equation}
and thus $\mathrm{COST}(p)=\Theta(M)$ (with constants depending only on $g_{*}$ and $\theta$).
If $D_i(x)=0$ for all $x<g_i$, then $Q_i(u)\ge g_i$ for every $u\in(0,1]$. Thus for every iteration $t$, $p_t=\max_{i\in A_t} Q_i(\theta/(M-t+1))\ge \max_{i\in A_t} g_i \ge g_{*},$
so $\sum_t p_t\ge M g_{*}$. For the upper bound, use the monotonicity of $Q_i$ in its argument:
$Q_i(\theta/(M-t+1))\le Q_i(\theta)\le 1$, hence $p_t\le \max_i Q_i(\theta)\le 1$ and the sum is at most $M$.
\end{proof}

We note that to achieve $\Theta(M)$, only a fraction of categories need a gap: if $\alpha M$ categories (for some constant $\alpha\in(0,1]$) satisfy $g_i\ge g_0>0$, then $\mathrm{COST}(\mathbf{p}^{\rm SWS})\ge \alpha g_0\,M$ and still $\mathrm{COST}(\mathbf{p}^{\rm SWS})\le M$.~\looseness=-1

\begin{proof}[Proof of Lemma~\ref{lemma:gapless}]
Fix $\varepsilon>0$. Since $\inf\{x: D_i(x)>0\}=0$, we have $Q_i(u)\to 0$ as $u\downarrow 0$ for every $i$.
Hence, for each $i$ there exists $\delta_i>0$ such that $Q_i(u)\le \varepsilon$ for all $u\in(0,\delta_i]$.
Let $\delta:=\min_{i\in[M]}\delta_i>0$. Let $J:=\left\lceil \theta/\delta \right\rceil$. For all $j\ge J$, we have $\theta/j \le \delta$, and therefore
\[
Q_{\max}(\theta/j)=\max_{i}Q_i(\theta/j)\le \varepsilon.
\]
Writing the total cost as $\sum_{j=1}^M Q_{\max}(\theta/j)$ (with $j=M-t+1$),
we obtain
\[
{\rm COST}(\mathbf{p}^{\rm SWS})
\le \sum_{j=1}^{J-1} Q_{\max}(\theta/j) + \sum_{j=J}^{M} \varepsilon
\le C(\varepsilon)+\varepsilon M,
\]
where $C(\varepsilon):=\sum_{j=1}^{J-1} Q_{\max}(\theta/j)$ is finite and independent of $M$.
Dividing by $M$ and taking $M\to\infty$ yields $\limsup_{M\to\infty} {\rm COST}(\mathbf{p}^{\rm SWS})/M\le \varepsilon$.
Since $\varepsilon$ was arbitrary, ${\rm COST}(\mathbf{p}^{\rm SWS})=o(M)$.
\end{proof}

\begin{observation} 
Canonical examples of distributions with property~\ref{prop:gapless}: 
\begin{itemize}
\item \textbf{Beta}$(a,1)$ on $[0,1]$ with $a>1$ (exactly $D(x)=x^a$) gives a total cost of $\mathrm{COST}(\mathbf{p}^{\rm SWS})=\Theta(M^{1-1/a})$. Similarly, the \textbf{Gamma}$(k > 1,\lambda)$ on $[0,\infty)$ with $k>1$ and the \textbf{Weibull}$(k > 1,\lambda)$ distributions reduce to the same computations.
\item \textbf{Weibull-at-zero type} distribution (stretched–exponential left tails). Take a CDF of the form $D(x)=\exp\!\bigl(-\lambda\,x^{-\beta}\,(1+o(1))\bigr)\quad\text{as }x\rightarrow 0,$ for $\beta,\lambda>0$. Then
\begin{equation}
Q(u)=\Bigl(\frac{\lambda}{\ln(1/u)}\Bigr)^{\!1/\beta}(1+o(1)),
\end{equation}
giving $\mathrm{COST}(\mathbf{p}^{\rm SWS})=\Theta\!\Big(\frac{M}{(\ln M)^{1/\beta}}\Big).$ 
This is \emph{sublinear but can get arbitrarily close to linear}: for any fixed $\epsilon>0$ and large $M$,
$\;M/(\ln M)^{1/\beta} \gg M^{1-\epsilon}$.
\item \textbf{Lognormal left tail.} Take the $\operatorname{LogNormal}(\mu,\sigma^2)$ with a super-light tail. For small $u$,
\begin{equation}
    Q(u)=\exp\!\Big(\mu+\sigma\,\Phi^{-1}(u)\Big)
=\exp\!\Big(\mu-\sigma\sqrt{2\ln(1/u)}\,(1+o(1))\Big).
\end{equation}
With $u=\theta/|A_t|$,
\begin{equation}
Q(\theta/|A_t|)=\exp\!\Big(\mu-\sigma\sqrt{2\ln |A_t|}\,(1+o(1))\Big),
\end{equation}
Taking the integral gives a total cost bound of 
\begin{equation}
    \mathrm{COST}(\mathbf{p}^{\rm SWS})=\sum_{j=1}^M Q(\theta/j)
=\Theta\!\Big(M\,\exp\!\big(-\sigma\sqrt{2\ln M}\big)\Big).
\end{equation}
\end{itemize}
\label{obs:sublinear}
\end{observation}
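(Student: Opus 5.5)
Throughout, write the total cost as ${\rm COST}(\mathbf{p}^{\rm SWS})=\sum_{j=1}^{M}Q(\theta/j)$, substituting $j=M-t+1$ exactly as in the proof of Lemma~\ref{lemma:gapless}. In each example the categories share the same distribution, so $Q_{\max}=Q$. The plan is to (i) get an asymptotic form of $Q(u)$ as $u\downarrow 0$ by inverting the assumed left tail of $D$, (ii) note that only the terms with $j\ge J$, for a fixed $J$, see this small-$u$ regime, while the first $J-1$ terms are each at most $Q(\theta)$ and so contribute $O(1)$, and (iii) compare the remaining sum $\sum_{j=J}^M h(j)$ with an integral, where $h$ is the leading asymptotic term of $Q(\theta/j)$. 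Because $h$ is monotone in every case, the sum is sandwiched between $\int_{J-1}^{M}h$ and $\int_{J}^{M+1}h$. Replacing $Q(\theta/j)$ by $h(j)$ only costs a multiplicative factor in $[1/2,2]$ once $J$ is large, by the $(1+o(1))$ in the tail.

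\textbf{Power tails.} For ${\rm Beta}(a,1)$ we have $Q(u)=u^{1/a}$ exactly. The sum is then $\theta^{1/a}\sum_{j\le M}j^{-1/a}$, and since $a>1$ the integral comparison gives $\Theta(M^{1-1/a})$. For ${\rm Gamma}(k,\lambda)$ and ${\rm Weibull}(k,\lambda)$ with $k>1$, the CDF satisfies $D(x)=Kx^{k}(1+o(1))$ near $0$ for an explicit $K$. Hence $Q(u)=(u/K)^{1/k}(1+o(1))$, and steps (ii)--(iii) give the same $\Theta(M^{1-1/k})$. The support is unbounded here, but $Q(\theta)<\infty$ is still a constant, so the first $J-1$ terms remain $O(1)$.

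\textbf{Stretched-exponential tail.} Solving $\lambda x^{-\beta}(1+o(1))=\ln(1/u)$ gives $Q(\theta/j)\asymp h(j):=\bigl(\lambda/\ln(j/\theta)\bigr)^{1/\beta}$. Since $h$ is decreasing, the lower bound comes from the terms $j\in[M/2,M]$:
\[
\sum_{j\le M}h(j)\ \ge\ \tfrac{M}{2}\,h(M)=\Theta\!\bigl(M(\ln M)^{-1/\beta}\bigr).
\]
For the upper bound, split at $\sqrt M$. The terms $j<\sqrt M$ contribute at most $\sqrt M\,Q(\theta)$. The terms $j\ge\sqrt M$ contribute at most $M\,h(\sqrt M)$, and $h(\sqrt M)\le 2^{1/\beta}h(M)(1+o(1))$ because $\ln\sqrt M=\tfrac12\ln M$. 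Together these give $\Theta\!\bigl(M/(\ln M)^{1/\beta}\bigr)$.

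\textbf{Lognormal tail.} This is the step I expect to be the main obstacle. The factor $(1+o(1))$ multiplies $\sqrt{2\ln(1/u)}$ \emph{inside} the exponential, so a multiplicative $o(1)$ error would become an unbounded factor $\exp\bigl(o(\sqrt{\ln M})\bigr)$. I would therefore use the sharper classical Gaussian-quantile expansion
\[
\Phi^{-1}(u)=-\sqrt{2\ln(1/u)}+\frac{\ln\ln(1/u)+\ln 4\pi}{2\sqrt{2\ln(1/u)}}+o\!\Bigl(\tfrac{1}{\sqrt{\ln(1/u)}}\Bigr).
\]
This shows that the error is $o(1)$ \emph{additively}. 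Consequently $Q(\theta/j)=e^{\mu}\,g(j)\,e^{o(1)}$ with $g(j):=\exp\bigl(-\sigma\sqrt{2\ln(j/\theta)}\bigr)$, and the remaining bounded factor only changes constants.

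A split at $\sqrt M$ is too crude for $g$, since $g(\sqrt M)/g(M)\to\infty$. Instead I would evaluate $\int_1^M g(x)\,dx$ with the substitution $x=e^{s}$, which turns it into $\int_0^{\ln M}\exp\bigl(s-\sigma\sqrt{2(s-\ln\theta)}\bigr)\,ds$. The exponent has derivative $1-O(1/\sqrt s)\to 1$, so a Laplace-type endpoint estimate (or one integration by parts) gives the integral as $\Theta\bigl(M\exp(-\sigma\sqrt{2\ln M})\bigr)$. Finally, $g$ is decreasing and the sum of $g(j)$ is far larger than $g(1)$, so the sum and the integral differ by at most $g(1)=O(1)$. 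This yields the stated $\Theta\!\bigl(M\exp(-\sigma\sqrt{2\ln M})\bigr)$.
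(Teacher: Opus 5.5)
Your proposal is correct and follows the same basic route as the paper. You invert the left tail of $D$ to get the small-$u$ behavior of $Q$, write ${\rm COST}(\mathbf{p}^{\rm SWS})=\sum_{j=1}^M Q(\theta/j)$, and compare this sum with an integral. The paper only sketches this step, essentially invoking the integral test, and for ${\rm Beta}(a,1)$ the standard asymptotics of $\sum_j j^{-1/a}$.

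Your version is more careful at the two delicate points. For the stretched-exponential tail you avoid the integral and use monotonicity of $Q$: the block $j\in[M/2,M]$ gives the lower bound and a split at $\sqrt M$ gives the upper bound. This works because $h$ is slowly varying.

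The lognormal case matters more. The paper's expansion $Q(u)=\exp\bigl(\mu-\sigma\sqrt{2\ln(1/u)}\,(1+o(1))\bigr)$ only yields terms of size $\exp\bigl(-\sigma\sqrt{2\ln M}\,(1+o(1))\bigr)$, which cannot support a $\Theta(\cdot)$ conclusion. Your second-order expansion of $\Phi^{-1}$ shows the error in the exponent is additively $o(1)$. Combined with the endpoint estimate for $\int_0^{\ln M}\exp\bigl(s-\sigma\sqrt{2(s+\ln(1/\theta))}\bigr)\,ds$, this is exactly what makes the stated $\Theta\bigl(M\exp(-\sigma\sqrt{2\ln M})\bigr)$ rigorous. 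Your remark that the $\sqrt M$ split is too crude there is also correct.

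One small caveat concerns the unbounded Gamma, Weibull and lognormal laws. The claim $Q(\theta)<\infty$ needs $\theta<1$; with $\theta=1$ the $j=1$ term is $Q(1)=\infty$. You should either assume $\theta<1$ or truncate to $[0,1]$, which is consistent with the model's bounded valuations. Truncation only rescales $D$ near $0$ by a constant, or shifts $\ln(1/u)$ by a constant in the lognormal case. It therefore leaves all your asymptotics unchanged up to constants.
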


\begin{proof}[Proof of Lemma~\ref{lemma:logm}]
Property~\ref{condition:left} implies that there are constants $c,t_0 > 0$ such that for all $t > t_0$ and all category indices $i$, $\mathbb{P}[v_i \leq 1/t] \geq c/t$. Thus, for $x \leq 1 / t_0$,  $D_i(x)\geq cx$, or equivalently, $Q_i(u) \leq u/c$. Thus, we can split the total cost in two sets: a set of $j \geq \lfloor \theta t_0 / c\rfloor$ for which $p_t \leq \frac{\theta}{cj}$ and a set $j \leq \lfloor \theta t_0 / c\rfloor$ which is finite and constant (denoted by $C$) with respect to $M$. Then, the total cost can be bounded as 

    \begin{equation}
        \mathrm{COST}(\mathbf{p}^{\rm SWS}) \;=\; \sum_{t=1}^M p_t \leq C + \sum\limits_{j = \lfloor \theta t_0 / c\rfloor}^M \frac{\theta}{cj} = O(\log(M))
    \end{equation}

\end{proof}

\begin{observation} 
Canonical examples of distributions with property~\ref{condition:left} (note that any log-concave law with pdf $f(0)>0$ will satisfy this bound): 
\begin{itemize}
\item \textbf{Uniform $[0,1]$.} $Q(u)=u$, so $p_t=\theta/(M-t+1)$ and ${\rm COST}=\theta\sum_{j=1}^M \frac{1}{j}= \Theta(\log M)$.
\item \textbf{Exponential $(\lambda)$ truncated to $[0,1]$.} Near $0$, $D(x)\sim \lambda x$, hence property~\ref{condition:left} holds with $c=\lambda/2$ for small enough $x_0$, so ${\rm COST}(\mathbf{p}^{\rm SWS})=O(\log M)$. A direct calculation of the quantiles of $D(x) = 1 - e^{-\lambda x}$ gives $Q(1/t) = -1/\lambda \ln(1 - 1/t) = \Theta(1/t)$ through a simple Taylor expansion, thus summing to ${\rm COST}(\mathbf{p}^{\rm SWS})=\Theta(\log M)$.
\item \textbf{Beta}$(1,b)$ with $b < 1$: we can compute the quantiles as $Q(1/t) = 1 - (1 - 1/t)^{1/b} = \Theta(1/t)$ by a similar Taylor expansion, obtaining again ${\rm COST}(\mathbf{p}^{\rm SWS})=\Theta(\log M)$.
\end{itemize}
\label{obs:disperseddistributions}
\end{observation}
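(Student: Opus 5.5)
The plan is to sandwich each example's quantile function between two linear functions near $0$. Since the distributions are identical across categories, $\max_{i\in A_t}Q_i = Q$. Writing $j=M-t+1$, the total cost is then
\[
\mathrm{COST}(\mathbf{p}^{\rm SWS})=\sum_{j=1}^M Q(\theta/j).
\]
\begin{itemize}
\item \textbf{Upper bound.} If $D(x)\ge c\,x$ on $[0,x_0]$, this is exactly property~\eqref{condition:left}, and Lemma~\ref{lemma:logm} gives $O(\log M)$.
\item \textbf{Lower bound.} If also $D(x)\le C x$ for all $x\in[0,1]$, then $Q(u)\ge u/C$ for every $u\in(0,1]$. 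Hence
\[
\mathrm{COST}\ge \frac{\theta}{C}\sum_{j=1}^M \frac{1}{j}\ge \frac{\theta}{C}\ln M.
\]
\end{itemize}
Together these give $\Theta(\log M)$. So for each example it suffices to verify the two linear bounds $c\,x\le D(x)\le C\,x$, the first only near $0$ and the second on all of $[0,1]$.

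\textbf{Uniform.} Here $D(x)=x$ and $Q(u)=u$. The cost is exactly $\theta H_M=\Theta(\log M)$, with no estimates needed.

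\textbf{Exponential$(\lambda)$ truncated to $[0,1]$.} I would use the truncated CDF
\[
D(x)=\frac{1-e^{-\lambda x}}{Z},\qquad Z=1-e^{-\lambda}.
\]
The elementary inequalities $\lambda x-\lambda^2x^2/2\le 1-e^{-\lambda x}\le \lambda x$ give $D(x)\le (\lambda/Z)x$ on all of $[0,1]$. They also give $D(x)\ge (\lambda/2Z)x\ge (\lambda/2)x$ for $x\le 1/\lambda$, which matches the stated constant $c=\lambda/2$. Equivalently, the exact quantile is
\[
Q(u)=-\tfrac{1}{\lambda}\ln\bigl(1-Zu\bigr),
\]
and $y\le -\ln(1-y)\le 2y$ for $y\le 1/2$ yields $Q(u)=\Theta(u)$ directly. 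This is the paper's Taylor computation $Q(1/t)=\Theta(1/t)$, with the normalizing constant $Z$ carried along; the constant $Z$ only affects the constants, not the rate.

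\textbf{Beta$(1,b)$ with $b<1$.} Here $D(x)=1-(1-x)^b$.
\begin{itemize}
\item Since $b<1$ and $1-x\in[0,1]$, we have $(1-x)^b\ge 1-x$, so $D(x)\le x$ and hence $Q(u)\ge u$.
\item The map $x\mapsto (1-x)^b$ is concave for $b\in(0,1)$, because its second derivative is $b(b-1)(1-x)^{b-2}<0$. It therefore lies below its tangent at $0$, giving $(1-x)^b\le 1-bx$, i.e.\ $D(x)\ge b\,x$ on all of $[0,1]$.
\end{itemize}
So property~\eqref{condition:left} holds with $c=b$ and $x_0=1$, and both bounds apply.

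\textbf{Main obstacle.} The only real gap is that Lemma~\ref{lemma:logm} supplies just the $O(\log M)$ direction, while the observation claims $\Theta(\log M)$. The matching lower bound needs the reverse inequality $D(x)\le Cx$. I would state this as a separate one-line step: if $D(x)\le Cx$ then $Q(u)\ge u/C$, directly from the definition of the left quantile, and then sum the harmonic series.

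For the parenthetical remark on log-concave laws with $f(0)>0$, I would only argue the $O(\log M)$ part. Continuity of the density at $0$ gives $f\ge f(0)/2$ on some $[0,x_0]$, hence $D(x)\ge \tfrac{f(0)}{2}x$ there, which is property~\eqref{condition:left}. The matching lower bound there additionally needs a bounded density near $0$, which log-concavity together with $f(0)<\infty$ provides.
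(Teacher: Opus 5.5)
Your proof is correct and takes essentially the same route as the paper: both reduce each example to showing $Q(u)=\Theta(u)$ near $0$ and then summing the harmonic series $\sum_{j\le M}Q(\theta/j)$. The differences are refinements rather than a new argument. You obtain the two-sided bound from global linear inequalities $c\,x\le D(x)\le C\,x$ (elementary exponential bounds, and $y^b\ge y$ plus concavity for Beta$(1,b)$) instead of Taylor expansions of $Q$; you make explicit that Lemma~\ref{lemma:logm} only gives the $O(\log M)$ direction; and you correctly carry the truncation constant $Z=1-e^{-\lambda}$, which the paper's exponential quantile formula drops.
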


We formulate a general criterion for the total price to stay constant with respect to $M$: 
 
\begin{proposition}
\label{prop:O1-iff}
The total cost of the principal is $O(1)$ with respect to $M$ if there exist constants $c_1,c_2>0$ such that for all $M\ge 2$,
\begin{equation}
    \begin{gathered}    
c_1 \int_{1}^{M} Q_{\max} \Bigl(\frac{\theta}{x}\Bigr)\,dx
\;\le\;
\sum_{j=1}^{M} Q_{\max} \Bigl(\frac{\theta}{j}\Bigr) \\ 
\;\le\;
c_2\Biggl(Q_{\max}(\theta)+\int_{1}^{M} Q_{\max}\Bigl(\frac{\theta}{x}\Bigr)\,dx\Biggr),
    \end{gathered}
\end{equation}
for $Q_{\rm max}(u) = \max_i Q_i(u)$. Equivalently, with the change of variables $u=\theta/x$,
\begin{equation}
    \int_{1}^{M} Q_{\max}\!\Bigl(\frac{\theta}{x}\Bigr)\,dx
= \theta\!\int_{\theta/M}^{\theta}\frac{Q_{\max}(u)}{u^{2}}\,du.
\end{equation}

The total cost is finite if and only if $\int_{0^+}^{\theta} \frac{Q_{\max}(u)}{u^{2}}\,du < \infty.$
\end{proposition}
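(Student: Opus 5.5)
The proof will compare the sum $S(M)=\sum_{j=1}^M Q_{\max}(\theta/j)$, which equals ${\rm COST}(\mathbf{p}^{\rm SWS})$ after reindexing $j=M-t+1$ (as in the proof of Lemma~\ref{lemma:gapless}), with the integral $I(M)=\int_1^M Q_{\max}(\theta/x)\,dx$. The one structural fact I will use is monotonicity. Each $Q_i$ is nondecreasing, so $Q_{\max}$ is nondecreasing, and $g(x):=Q_{\max}(\theta/x)$ is nonincreasing in $x$. Also $0\le g\le 1$, since valuations lie in $[0,1]$.

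\emph{Sandwich.} For a nonincreasing $g$ and $j\ge 2$ we have
\[
\int_j^{j+1} g \le g(j)\le \int_{j-1}^{j} g .
\]
Summing gives $S(M)-g(1)\le I(M)$, that is,
\[
S(M)\le Q_{\max}(\theta)+I(M).
\]
This is the upper bound with $c_2=1$. For the lower bound, summing $g(j)\ge\int_j^{j+1}g$ over $j=1,\dots,M-1$ gives $S(M)\ge \int_1^M g=I(M)$, so $c_1=1$ works. The displayed inequality therefore holds unconditionally, and the real content of the proposition is the finiteness criterion.

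\emph{Change of variables.} Substituting $u=\theta/x$, so that $dx=-\theta\,du/u^2$, gives
\[
I(M)=\theta\int_{\theta/M}^{\theta}\frac{Q_{\max}(u)}{u^2}\,du .
\]

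\emph{Iff criterion.} $S(M)$ is nondecreasing in $M$, so it is $O(1)$ exactly when $\lim_M S(M)<\infty$. By the sandwich this holds exactly when $\lim_M I(M)<\infty$. By monotone convergence (the integrand is nonnegative), $\lim_M I(M)=\theta\int_{0^+}^{\theta}Q_{\max}(u)u^{-2}\,du$. This gives both directions of the criterion.

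\emph{Main obstacle.} The step needing the most care is the dependence of $Q_{\max}$ on $M$, because the maximum ranges over $[M]$ and so changes as categories are added. I will handle this as the paper implicitly does. Either $Q_{\max}$ is taken over a fixed (possibly infinite) family of distributions, or the criterion is required with an $M$-independent dominating function, so that the sup over families is nondecreasing and the sandwich is applied to that fixed function. In the latter case the uniform condition in Lemma~\ref{lemma:constant} gives $Q_{\max}(u)\le (u/c)^{1/b}$ near $0$, whose integral against $u^{-2}$ converges since $1/b>1$. This recovers Lemma~\ref{lemma:constant} as a corollary.

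I will also record two edge cases. For atoms at $0$, $Q_{\max}(u)=0$ for small $u$ and the criterion holds trivially. For the last index, where $\theta/j$ may be small, nothing special is needed, because the sandwich bounds handle it.
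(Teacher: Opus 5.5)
Your proposal is correct and is essentially the paper's proof, written out in full: monotonicity of $Q_{\max}$, the integral test for the nonincreasing function $x\mapsto Q_{\max}(\theta/x)$, and the substitution $u=\theta/x$. It adds useful clarifications: the sandwich holds unconditionally with $c_1=c_2=1$, so the real content is the finiteness criterion; monotone convergence passes to the limit; and $Q_{\max}$ must be fixed, or dominated, independently of $M$. One caveat you inherit from the paper rather than repair is that $p_t$ maximizes only over the active set $A_t$, so in general $\mathrm{COST}(\mathbf{p}^{\rm SWS})\le\sum_{j=1}^M Q_{\max}(\theta/j)$ rather than equality; this suffices for the ``if'' direction of the criterion, but the ``only if'' direction needs equality, which holds for instance when all $D_i$ coincide.
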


\begin{proof}[Proof]
The proposition follows from the nondecreasing property of  $Q_{\max}(u)$ and applying the integral test for monotone sequences to
$Q_{\max}(\theta/x)$ with the mentioned change of variables.
\end{proof}

\begin{proof}[Proof of Lemma~\ref{lemma:constant}]
If there exist $x_0\in(0,1]$, $\beta\in(0,1)$ and $c>0$ such that for all $i$ and $x\in[0,x_0]$, $D_i(x)\;\ge\; c\,x^{\beta}$, we can compute the quantiles for all small $u$,
$Q_i(u)\le (u/c)^{1/\beta}$. Thus,
\begin{equation}
p_t \le Q_{\max}(\theta/j)\;\le\; ( \theta/c)^{1/\beta}\, j^{-1/\beta}.
\end{equation}
Since $1/\beta>1$, $\sum_{j\ge 1} j^{-1/\beta}<\infty$, $\sum_t p_t=O(1)$ w.r.t. $M$. Finally, if the distributions $D_i$ have an atom at $0$, it means that $D_i(0)\ge \rho_i>0$, then for $j\ge \theta/\rho_i$, $Q_i(\theta/j)\le 0$. Given nonnegative prices, only finitely many terms are positive, so $\sum_t p_t=O(1)$  w.r.t. $M$.~\looseness=-1
\end{proof}

\begin{observation}[Phase transition behavior for the ${\rm Beta}(a,1)$ distribution family]
\label{prop:beta}
The Beta distribution can be used to illustrate all regimes of growth for the total price of the principal under the stochastic wage suppression pricing strategy discussed in the main text. If $D(x)=x^a$ (i.e., $X\sim\mathrm{Beta}(a,1)$), then $Q(u)=u^{1/a}$, so $p_t = \left(\theta/(M-t+1)\right)^{1/a}.$ Then
\begin{equation}
\begin{gathered}
{\rm COST}(\mathbf{p}^{\mathrm{SWS}}) \;=\; \sum_{j=1}^M \Bigl(\frac{\theta}{j}\Bigr)^{1/a}
\;=\;
\begin{cases}
\Theta(M^{1-1/a}), & a>1,\\[2pt]
\Theta(\log M), & a=1,\\[2pt]
\Theta(1), & a<1.
\end{cases}
\end{gathered}
\end{equation}
\end{observation}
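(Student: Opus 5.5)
We read the statement immediately above as the Beta$(a,1)$ phase-transition Observation~\ref{prop:beta}. The plan is a direct computation using the explicit quantile, followed by the standard sum–integral comparison, which is essentially Proposition~\ref{prop:O1-iff} specialized to a single distribution.

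First, for $D(x)=x^a$ on $[0,1]$ the CDF is continuous and strictly increasing, so $Q(u)=u^{1/a}$ exactly. Since all categories share the same law, the maximum in Definition~\ref{def:SWS} is attained trivially and $p_t=(\theta/|A_t|)^{1/a}$. Reindexing with $j=M-t+1$ gives
\[
\mathrm{COST}(\mathbf{p}^{\rm SWS})=\theta^{1/a}\sum_{j=1}^M j^{-s},\qquad s:=1/a>0 .
\]
The factor $\theta^{1/a}$ is a constant independent of $M$, so everything reduces to the growth of the generalized harmonic sum $H_M(s)=\sum_{j\le M}j^{-s}$.

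Second, because $x\mapsto x^{-s}$ is decreasing, the integral test gives
\[
\int_1^{M+1}x^{-s}\,dx\le H_M(s)\le 1+\int_1^M x^{-s}\,dx .
\]
We then split into three cases.
\begin{itemize}
\item For $a>1$ we have $s<1$, and both integrals equal $(M^{1-s}-1)/(1-s)$ up to an $O(1)$ change. This gives $\Theta(M^{1-1/a})$.
\item For $a=1$ the integrals are $\log(M+1)$ and $\log M$, so the sum is $\Theta(\log M)$. This matches Observation~\ref{obs:disperseddistributions} for the uniform distribution.
\item For $a<1$ we have $s>1$, so $H_M(s)\le 1+1/(s-1)=\zeta$-type constant, which bounds the sum above uniformly in $M$. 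The sum is also bounded below by its first term, $\theta^{1/a}>0$. Hence it is $\Theta(1)$, consistent with Lemma~\ref{lemma:constant} taking $b=a$, $c=1$, $x_0=1$.
\end{itemize}

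Nothing here is deep; the only point needing care is that the $\Theta$ claims require lower bounds as well as the upper bounds provided by Lemmas~\ref{lemma:gapless}, \ref{lemma:logm} and \ref{lemma:constant}. The lower integral in the second step supplies these lower bounds. The constants depend on $\theta$ and $a$ but not on $M$, which is what the statement requires.
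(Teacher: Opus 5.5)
Your proposal is correct and follows the same route as the paper: compute $Q(u)=u^{1/a}$, reindex the cost as $\theta^{1/a}\sum_{j=1}^{M} j^{-1/a}$, and invoke the standard growth of $p$-series in the three cases $1/a<1$, $1/a=1$, $1/a>1$. The paper simply cites these asymptotics, while you write out the two-sided integral-test bounds; that is what gives the lower halves of the $\Theta$ claims.
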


\begin{proof}
The quantiles of the ${\rm Beta}$ distribution can be computed in closed form as $Q(u)=u^{1/a}$. Then, the standard asymptotics of $\sum_{j=1}^M j^{-\alpha}$ apply: it diverges like $M^{1-\alpha}$ for $\alpha\in(0,1)$, like $\log M$ for $\alpha=1$, and converges for $\alpha>1$.~\looseness=-1 
\end{proof}

\begin{observation}[On optimality under distributional knowledge.]
Given full knowledge of $\{D_i\}_{i\in A_t}$, the minimum price at iteration $t$ necessary to satisfy the iteration-level success floor $q_{*}$ is
\begin{equation}
p_t^{*} \;\in\; \arg\min_{p\in[0,1]}\Big\{p:\; 1-\prod_{i\in A_t}\bigl(1-D_i(p)\bigr)\;\ge\; q_{*}\Big\}.
\end{equation}

The pricing strategy based on the $1/|A_t|$ quantiles essentially assumes that the principal has partial knowledge, through an oracle access to the $\{Q_i(u): i\in A_t\}$ at $u = 1/|A_t|$.
\label{obs:optimality}
\end{observation}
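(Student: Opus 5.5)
The plan is to study the per-iteration success probability as a function of the posted price,
\[
q_t(p) \;:=\; 1-\prod_{i\in A_t}\bigl(1-D_i(p)\bigr), \qquad p\in[0,1],
\]
and to show that the feasible set $F_t:=\{p\in[0,1]: q_t(p)\ge q_{*}\}$ is a nonempty interval of the form $[p_t^{*},1]$. This gives the claim: $p_t^{*}$ is the minimum price meeting the iteration-level floor, and no smaller price can meet it.

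The first step is monotonicity. Each $D_i$ is a CDF and hence nondecreasing. Therefore each factor $1-D_i(p)$ is nonincreasing and lies in $[0,1]$. A product of such factors is again nonincreasing, so $q_t(p)$ is nondecreasing in $p$. It follows that $F_t$ is an up-set of $[0,1]$: whenever $p\in F_t$ and $p'\ge p$, also $p'\in F_t$.

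The second step is nonemptiness. Valuations lie in $[0,1]$, so $D_i(1)=1$ and $q_t(1)=1\ge q_{*}$, giving $1\in F_t$. More informatively, Lemma~\ref{lemma:quantileprices} shows that the SWS price $p_t=\max_{i\in A_t}Q_i(\theta/|A_t|)$ already lies in $F_t$ when $q_{*}=1-e^{-\theta}$. Consequently $p_t^{*}\le p_t^{\rm SWS}$, which makes explicit that quantile-oracle pricing is feasible but possibly dominated by full-knowledge pricing.

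The third step, which I expect to be the only delicate point, is that the infimum $p_t^{*}:=\inf F_t$ is actually attained, so that the $\arg\min$ is well defined. The tool is right-continuity. Each $D_i$ is right-continuous, and a finite product of right-continuous bounded functions is right-continuous, so $q_t$ is right-continuous. Take a sequence $p_n\in F_t$ with $p_n\downarrow p_t^{*}$. Right-continuity gives $q_t(p_t^{*})=\lim_n q_t(p_n)\ge q_{*}$, hence $p_t^{*}\in F_t$. Combined with the up-set property, this yields $F_t=[p_t^{*},1]$.

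Finally, I would connect this to the Remark on cost optimality. For identical $D_i=D$, the constraint reduces to $D(p)\ge 1-(1-q_{*})^{1/|A_t|}$, so $p_t^{*}=Q\bigl(1-(1-q_{*})^{1/|A_t|}\bigr)$ by the definition of the left quantile. For general heterogeneous $D_i$, computing $p_t^{*}$ requires evaluating the whole product $\prod_{i\in A_t}(1-D_i(p))$, that is, full distributional knowledge. SWS instead queries each $Q_i$ at the single point $u=1/|A_t|$, or $u=\theta/|A_t|$ in the parametrized version. This is the partial-knowledge interpretation stated in the observation.
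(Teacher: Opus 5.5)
Your proof is correct. The paper gives no proof of this observation. It treats the minimum as self-evident and, in the Remark on cost optimality, only carries out the identical-distribution computation $D(p)\ge 1-(1-q_*)^{1/|A_t|}$, which is your final step.

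What your route adds is the well-posedness of the $\arg\min$. You show that $q_t(\cdot)$ is nondecreasing, that $p=1$ is always feasible, and that each $D_i$ is right-continuous. Together these give a feasible set that is exactly the closed interval $[p_t^*,1]$. Right-continuity is the appropriate tool here for two reasons. It matches the weak acceptance rule $v_j\le p_t$ of the model. It is also what secures attainment once atoms are present, as in the mixtures $D_j^{\rm mix}$ of Section~\ref{sec:randomcollact}, the shifted $D_i^*$ of Section~\ref{sec:targetedcollact}, or an atom at $0$; continuity of $q_t$ cannot be assumed in those cases.

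Your comparison $p_t^*\le p_t^{\rm SWS}$, obtained via Lemma~\ref{lemma:quantileprices}, is also the accurate form of the optimality claim. The paper's Remark asserts that in the identical case the optimal price is ``exactly'' the SWS price. But with $q_*=1-e^{-\theta}$, the optimal quantile level is $1-e^{-\theta/|A_t|}$, which is strictly below $\theta/|A_t|$. For uniform valuations, for instance, $p_t^*=1-e^{-\theta/|A_t|}$ while $p_t^{\rm SWS}=\theta/|A_t|$. The two levels agree only to first order in $\theta/|A_t|$, so you are right to keep the inequality rather than upgrade it to equality.
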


\subsection{Random collective action }

\begin{proof}[Proof of Lemma~\ref{lemma:horizontaltime}]
For clarity, we present the calculation for the uniform family; the same argument extends to distributions satisfying left-tail regularity $D_i(x)\geq cx$. We compute at iteration $t$ and for an active category $i$,
\begin{equation}
D_i^{\mathrm{mix}}(p_t) \;\ge\; (1-\alpha)\,p_t,
\end{equation}
hence the probability of success in iteration $t$ can be written as
\begin{equation}
\begin{gathered}
q_t \;=\; 1-\prod_{i\in A_t}\bigl(1-D_i^{\mathrm{mix}}(p_t)\bigr)
\;\ge\; 1-\Bigl(1-(1-\alpha)\,p_t\Bigr)^{|A_t|} \\
\;\ge\; 1-\exp\!\bigl(- (1-\alpha)\,\theta\bigr)
\;=:\; q_{*}(\alpha).
\end{gathered}
\end{equation}
Thus, we obtain again that $\Delta_t\sim\mathrm{Geom}(q_t)$ with $q_t\ge q_{*}(\alpha)$, and thus 
\begin{equation}
\begin{gathered}
\mathbb{E}[{\rm WAIT}(\mathbf{p}_{\alpha}^{\rm SWS})]\;\le\;\frac{M}{q_{*}(\alpha)} \;=\; O(M).
\end{gathered}
\end{equation}

The concentration proof around $\mathbb{E}[{\rm WAIT}(\mathbf{p}_{\alpha}^{\rm SWS})]$ follows exactly like in the proof of Theorem~\ref{thm:quantimeconcentration}: for any $x$, $D_i(x)\ge (1-\alpha)x$ because the uniform component contributes $(1-\alpha)x$ and the point mass only increases the CDF. We plug $x=p_t$ and note that the success probabilities are still geometric due to the independence of the $N_t$ random variables. The rest of the proof is identical to Theorem~\ref{thm:quantimeconcentration}, as the tail bound follows from applying Bernstein's inequality for the sum of independent sub-exponential variables, with $q_{*}$ replaced by $q_{*}(\alpha)$.~\looseness=-1
\end{proof}

\begin{proof}[Proof of Lemma~\ref{lemma:randomcollectivecost}]
Let $p^-:=\overline p_i$ and let $D_i(p^-)$ denote the left limit at $p^-$.
For the mixture CDF $D_i^{\rm mix}(x)=(1-\alpha)D_i(x)+\alpha\mathbf{1}\{x\ge p^-\}$, the left-quantile function
$Q_i^{\alpha}$ satisfies
\[
Q_i^{\alpha}(u)=
\begin{cases}
Q_i\!\left(\frac{u}{1-\alpha}\right), & 0 \le u < (1-\alpha)D_i(p^-),\\[2pt]
p^-, & (1-\alpha)D_i(p^-) \le u \le \alpha + (1-\alpha)D_i(p^-),\\[2pt]
Q_i\!\left(\frac{u-\alpha}{1-\alpha}\right), & \alpha + (1-\alpha)D_i(p^-) < u \le 1.
\end{cases}
\]

We partition the sum of the quantiles in three parts over the quantiles $\theta / |A_t|$, a first threshold $u_1  =\lceil \theta / (1 - \alpha)D_i(p^-) \rceil$ and a second threshold $u_2 = \lfloor \theta / (\alpha + (1 - \alpha) D_i(p^-))\rfloor$ for the index set $1,2,\dots, M$. These thresholds are independent of $M$, and thus we have a constant number of them in the price floor regime $p^-$ and in the upper regime, whereas the lower quantile regime is just shifted by a factor of $1/(1-\alpha)$ from the sum in the original distribution $D_i$. We apply the same for taking the maximum over the quantiles in the computation of $\mathbf{p}_{\alpha}^{\rm SWS}$ evaluated at each quantile. 
As such, for the distribution regimes we have analyzed in closed-form as a function $M$, there exists constants $C_1,C_2 > 0 $ s.t.~\looseness=-1 

\begin{equation}
    \frac{1}{1 - \alpha} {\rm COST}(\mathbf{p}^{\rm SWS}) - C_1 \leq {\rm COST}(\mathbf{p}_{\alpha}^{\rm SWS}) \leq \frac{1}{1 - \alpha} {\rm COST}(\mathbf{p}^{\rm SWS}) + C_2
\end{equation}

This is because for any distributions with properties~\ref{prop:gapless},~\ref{condition:left}, or~\ref{condition:poly}, we can use a change of variable $u \rightarrow u/(1 - \alpha)$. In the case where all distributions $D_i$ are the same with quantiles $Q_i := Q$, it is trivial to notice that ${\rm COST}(\mathbf{p}_{\alpha}^{\rm SWS}) = \sum_{j = 1}^{M} Q^{\alpha}\left(\frac{\theta (1-\alpha)}{|A_t|} \right)$ is approximately $\frac{1}{1- \alpha} \int_{1 - \alpha}^{M(1 - \alpha)} Q(\theta / x) dx$. The same argument follows for different distributions $D_i$ that come from the same category (or families with the same tail properties). For distributions that have a positive gap at $0$, collective action is not actually needed, since the total cost paid by the principal is $\Theta(M)$ in the absence of any collective.
\end{proof}

\subsection{Targeted collective action }

\begin{proposition}
    For every $u\in (0,1-\epsilon)$, the quantiles of the adjusted distribution 
    \begin{equation}
    D_i^*(x)\;:=\;\max\{D_i(x)-\epsilon,\,0\}\ \ \text{for }x<1,\qquad D_i^*(1)=1.
\end{equation}
satisfy $Q_i^*(u)\;\ge\; Q_i(u+\epsilon)$ for any $i \in [M]$. 
\label{prop:quantileineq}
\end{proposition}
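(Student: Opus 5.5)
The proof will work directly from the definition of the left quantile, $Q(u)=\inf\{x: D(x)\ge u\}$. Fix $i$ and $u\in(0,1-\epsilon)$. The aim is to show that every $x$ in the set $\{x: D_i^*(x)\ge u\}$ also lies in $\{x: D_i(x)\ge u+\epsilon\}$. Taking infima then gives $Q_i^*(u)\ge Q_i(u+\epsilon)$, since the infimum of a subset is at least the infimum of the superset.

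\textbf{Step 1 (points below $1$).} Let $x<1$ satisfy $D_i^*(x)\ge u$. Because $u>0$, the value $\max\{D_i(x)-\epsilon,0\}$ is positive, so it equals $D_i(x)-\epsilon$. Hence $D_i(x)-\epsilon\ge u$, that is, $D_i(x)\ge u+\epsilon$.

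\textbf{Step 2 (the point $x=1$).} Here $D_i(1)=1$, because valuations lie in $[0,1]$. Since $u<1-\epsilon$, we have $u+\epsilon<1$, so $D_i(1)\ge u+\epsilon$ holds as well. Combining the two steps, $\{x\in[0,1]: D_i^*(x)\ge u\}\subseteq\{x\in[0,1]: D_i(x)\ge u+\epsilon\}$, and the claim follows by taking infima. No right-continuity or attainment of the infimum is needed, only this set inclusion.

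\textbf{Step 3 (checking the construction).} Before running the argument, I would confirm that $D_i^*$ is a valid CDF, meaning it is nondecreasing, right-continuous on $[0,1)$, and equal to $1$ at $1$, so that $Q_i^*$ is well defined. This holds because $x\mapsto\max\{D_i(x)-\epsilon,0\}$ inherits monotonicity and right-continuity from $D_i$, and its values on $[0,1)$ are at most $1-\epsilon<1$, so the jump to $1$ at $x=1$ preserves monotonicity.

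I do not expect a real obstacle. The one point needing care is the boundary point $x=1$ together with the role of the strict inequality $u<1-\epsilon$: it guarantees $u+\epsilon<1$, so that $Q_i(u+\epsilon)$ is a genuine quantile and the point $x=1$ does not break the inclusion. The hypothesis $u>0$ is what rules out the branch where the $\max$ equals $0$.
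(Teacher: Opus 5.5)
Your proof is correct and follows the same argument as the paper: you show $\{x: D_i^*(x)\ge u\}\subseteq\{x: D_i(x)\ge u+\epsilon\}$, using $u>0$ to force the $\max$ onto the branch $D_i(x)-\epsilon$, and then take infima. Your separate check of $x=1$, where $D_i^*(1)=1$ is set directly rather than given by the $\max$ formula, covers a boundary case the paper's proof passes over silently; note that this step needs only $u+\epsilon\le 1$, not the strict inequality.
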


\begin{proof}
    For an $x$ that satisfies $D_i^*(x) \geq u > 0$, it means that $\max (D_i(x) -\epsilon,0) = D_i(x) - \epsilon$, so $D_i^*(x) = D_i(x) - \epsilon \geq u,$ so $ D_i(x) \geq \epsilon + u$. Coming back to the definition of the quantiles and taking infimums over $x$, we obtain the desired inequality. The inequality does not hold for $u =0$, but that case is not needed in our set-up since our quantile prices are set for non-negative quantiles. Note that for $u \geq 1 - \epsilon$, both quantiles are $1$: $Q_i^*(u) = Q_i(u + \epsilon) = Q_i(1)$, since the distributions have support on $[0,1]$. This holds for any $i \in [M]$. 

Then, we can show that $\mathrm{TV}(D_i^*,D_i)\le \epsilon$ for all $i$. To see this, take an arbitrary category $i\in[M]$ and denote by $\mu_i$ the measure on $[0,1]$ associated to the CDF $D_i$. We assume that $\epsilon < \mu([0,1])$ (so we shave off less than the available total mass). By construction, we had removed the lowest $\epsilon$ mass, but this holds for any measure with mass $\epsilon$, so let's denote by $l$ a measure on $[0,1]$ with $l([0,1]) = \epsilon$. Then, we can define another measure $\nu$ as 

\begin{equation}
    \nu = \mu - l + \epsilon \delta_1,
\end{equation}
where $\delta_1$ is just the Dirac function at $1$ (where we placed the $\epsilon$ mass). Then, we associated $D_i^*$ with $\nu$ and see that ${\rm TV}(D_i^*, D_i) = \epsilon$. By the total variation distance, 

\begin{equation}
    {\rm TV}(D_i^*, D_i) = \sup_{A \subseteq [0,1] } |\mu(A) - \nu(A)|
\end{equation}

For any Borel set $A$, $\mu(A) - \nu(A) = l(A) - \epsilon \delta_{1 \in A}$. If $1 \notin A$, this is equal to $l(A) \leq l([0,1]) = \epsilon$ since $A \subseteq [0,1]$. If $1 \in A$, then its absolute value is less than $\max (l(A) - \epsilon, \epsilon - l(A)) \leq \epsilon$ as above. Finally, take $A = [0,1)$ to see that $\mu(A) - \nu(A) =  l([0,1)) = \epsilon$, hence ${\rm TV}(D_i^*, D_i) = \epsilon$. 
\end{proof}